\documentclass[11pt,a4paper]{article}

\usepackage[T1]{fontenc}
\usepackage{amsmath,amssymb,amsfonts,latexsym,epsfig}
\usepackage{amssymb,amsmath}
\usepackage{amsfonts}

\usepackage{psfrag,xspace}
\usepackage{multirow}

\usepackage{algorithmic} 
\usepackage[boxruled,boxed,algo2e]{algorithm2e}
\usepackage{hyperref}
\usepackage[mathbf]{euler}
\usepackage{notations}
\usepackage{url}
\usepackage{vector}

\usepackage{fullpage}

\newtheorem{theorem}{Theorem}[section]
\newtheorem{lemma}[theorem]{Lemma}

\newtheorem{proposition}[theorem]{Proposition}
\newtheorem{corollary}[theorem]{Corollary}

\newtheorem{conjecture}[theorem]{Conjecture}
\newtheorem{remark}[theorem]{Remark}

\begin{document}

\title{Random polynomials and expected
complexity of bisection methods for real solving
}

\author{
  Ioannis Z. Emiris%
  \thanks{National Kapodistrian University of Athens, Greece.
    Email: \texttt{emiris(at)di.uoa.gr}}
  \and 
  Andr\'e Galligo%
  \thanks{University of Nice, France.
    Emai: \texttt{galligo(at)unice.fr}}
  \and 
   Elias P.~Tsigaridas%
   \thanks{University of Athens, Greece 
     and \AA{}rhus University, Denmark.
     Email: \texttt{elias.tsigaridas(at)gmail.com}}
}

\date{}
\maketitle

\begin{abstract}
  Our probabilistic analysis sheds light to the following questions:
  Why do random polynomials seem to have few, and well separated real roots,
  on the average?
  Why do exact algorithms for real root isolation may
  perform comparatively well or even better than numerical ones?

  We exploit results by Kac, and by Edelman and Kostlan
  in order to estimate the real root separation of degree $d$ polynomials with
  i.i.d.\ coefficients that follow two zero-mean normal distributions: 
  for $SO(2)$ polynomials, the $i$-th coefficient has variance ${d \choose i}$,
  whereas for Weyl polynomials its variance is ${1/i!}$.
  By applying results from statistical physics,
  we obtain the expected (bit) complexity of \func{sturm} solver,
  $\sOB(r d^2 \tau)$, where $r$ is the number of real roots 
  and $\tau$ the maximum coefficient bitsize.
  Our bounds are two orders of magnitude tighter than the record worst
  case ones.
  We also derive an output-sensitive bound in the worst case.
   
  The second part of the paper shows that the expected number of real roots
  of a degree $d$ polynomial in the Bernstein basis is $\sqrt{2d}\pm\OO(1)$,
  when the coefficients are i.i.d.\ variables with moderate standard deviation.
  Our paper concludes with experimental results which corroborate our
  analysis.
\end{abstract}

\vspace{0.9mm}
\noindent
{\bf Categories and Subject Descriptors:}
F.2 [Theory of Computation]: Analysis of Algorithms and Problem Complexity;
I.1 [Computing Methodology]: Symbolic and algebraic manipulation: Algorithms

{\bf Keywords:} Random polynomial, real-root isolation, Bernstein
polynomial, expected complexity, separation bound

\section{Introduction}

One of the most important procedures in computer algebra and algebraic
algorithms is root isolation of univariate polynomials. The goal is
to compute intervals in the real case, or squares in the
complex case, that isolate the roots of the polynomial and to compute
one such interval, or square, for every root.

We restrict ourselves to exact algorithms, i.e.\ algorithms that perform
arithmetic with rational numbers of arbitrary size.  
The best known algorithms are subdivision algorithms, 
based on Sturm sequences (\func{sturm}),
or on Descartes' rule of sign (\func{descartes}),
or on Descartes' rule and the Bernstein basis representation (\func{bernstein}).
Subdivision algorithms mimic binary search and their
complexity depends on separation bounds.
They are given an initial interval, or compute one containing all real roots.
Then, they repeatedly subdivide it until it is certified that zero or
one real root is contained in the tested interval.

Thanks to important recent progress
\cite{Dav:TR:85,Yap:SturmBound:05,ESY:descartes,emt-lncs-2006}, 
the complexity of \func{sturm}, \func{descartes} and \func{bernstein}
is, in the worst case, $\sOB(d^4 \tau^2)$,
where $d$ is the degree of the polynomial and $\tau$ the maximum
coefficient bitsize.
The bound holds even when the polynomial is non-squarefree,
and we also compute (all) the multiplicities.
This requires a preprocessing of complexity $\sOB(d^2\tau)$,
in order to compute the square-free factorization.
The new polynomial has coefficients of size $\OO(d+\tau)$.
The complexity of this stage, although significant in practice,
is asymptotically dominated.  
In this paper we consider the behavior of \func{sturm} on
random polynomials of various forms.
Our results can be extended to \func{descartes} and \func{bernstein}.

Another important exact solver (\func{cf}) is based
on the continued fractions expansion of the real roots
e.g.~\cite{Akritas:implementation,sharma-tcs-2008,et-tcs-2007}.
Several variants of this solver exist, depending on the method used to
compute the partial quotients of the real roots.  Assuming
the Gauss-Kuzmin distribution holds for the real algebraic
numbers, it was proven \cite{et-tcs-2007},
that the expected complexity is $\sOB( d^4 \tau^2)$.
By spreading the roots, the expected complexity becomes
$\sOB( d^3 \tau)$ \cite{et-tcs-2007}.
The currently known worst-case bound is $\sOB(d^4 \tau^2)$ \cite{mr-jsc-2009}.
This paper reduces the gap between \func{sturm} \func{cf}.

Numerical algorithms compute an approximation, up to a desired accuracy,
of all complex roots. They can be turned
into isolation algorithms by requiring the accuracy to be equal 
to the theoretical worst-case separation bound.
The current record is $\sOB( d^3 \tau)$ and is achieved by
recursively splitting the polynomial until one obtains linear
factors that approximate sufficiently the roots \cite{Sch82,Pan02jsc}.
It seems that the bounds could be improved to $\sOB(d^2 \tau)$ with a more
sophisticated splitting process.
We should mention that optimal numerical algorithms are very difficult to implement.

Even though the complexity bounds of the exact algorithms
are worse than those of the numerical ones, recent implementations of the
former tend to be competitive, if not superior, in practice, e.g.\
\cite{JohKraLynRicRus-issac-06,RouZim:solve:03,emt-lncs-2006,et-tcs-2007}. 
Our work attempts to provide an explanation for this. 
There is a huge amount of work concerning root isolation
and the references stated represent only the tip of the iceberg;
we encourage the reader to refer to the references.

Most of the work on random polynomials, which typically concerns
polynomials in the monomial basis, focuses on the {\em number} of real roots.
Kac's \cite{Kac-BAMS-43} celebrated result
estimated the expected number of real roots of 
random polynomials (named after himself)
as $\frac{2}{\pi}\log{d} + \OO(1)$,
when the coefficients are standard normals i.i.d.\ or uniformly distributed,
and $d$ is the degree of the polynomial.
We refer the reader to
e.g.~\cite{bb-lms-1932,lo-lms-1938,ErdTur-AM-1950} for a historical
perspective and to \cite{BhSa86} for various references.
A geometric interpretation of this result and many generalizations
appear in \cite{EdeKos-BAMS-95}. 
We mainly examine $SO(2)$ polynomials, where the $i$-th coefficient is 
an i.i.d.\ Gaussian random variable of zero mean and variance ${d \choose i}$.
According to \cite{EdeKos-BAMS-95},
they are ``the most natural definition of random polynomials'', 
see also \cite{ss-bezout-2}.
Their expected number of real roots is $\sqrt{d}$.
For Weyl polynomials, the $i$-th coefficient is an i.i.d.\ Gaussian random
variable of zero mean and variance ${1/i!}$, and
the expected number of real roots is about
$\frac{2}{\pi} \sqrt{d} + \OO(1)$ where higher-order terms are not known to date
\cite{sm-jsp-2008}.
For results on complex roots we refer to 
e.g.~\cite{hannay-jopa-1996,hammersley-1954}.  

Our first contribution concerns the expected bit complexity
of \func{sturm}, when the input is random polynomials with i.i.d.\
coefficients; notice that their roots are {\em not} independently distributed!
In other words, we have to go beyond the theory of Kac, and Edelman and Kostlan,
in order to study the statistical behavior of root differences and,
more precisely, the {\em minimum} absolute difference.
We examine $SO(2)$ and Weyl random polynomials, and exploit
the relevant progress achieved in statistical physics.
In fact, these polynomial classes are of particular interest in statistical
physics because they model zero-crossings in diffusion equations and,
eventually, a chaotic spin wave-function 
\cite{bd-jsp-1997,hannay-jopa-1996,sm-jsp-2008}. 
The key observation is that, by applying these results, we can quantify the
correlation between the roots, which is sufficiently weak, but does
exist. For both classes of polynomials we prove an expected case bit
complexity bound of $\sOB(r \, d^2 \tau)$, where $r$ is the number of
real roots.
A close related bound was speculated in \cite{Johnson-phd-91}, based
on experimental evidence.

Our bounds are tighter than those of the worst case by two factors.
In the course of this analysis, \func{sturm} is shown to be output-sensitive,
with complexity proportional to the number of real roots in the given interval,
even in the worst case. A similar bound appeared in \cite{heindel-jacm-1971}.

Besides polynomials in the monomial basis, polynomials in the
Bernstein basis are important in many applications, e.g.\ CAGD and
geometric modeling.  They are of the form 
$\sum_{i=0}^{d}{a_i {d\choose i}x^i(1-x)^{d-i}}$. 
For the random polynomials that we
consider, $a_i$ are standard normals i.i.d. random variables, that is
Gaussians with zero mean and variance one. Such polynomials are also
important in Brownian motion \cite{kowalski-amm-2006}.  In
\cite{ad-joc-2009}, they examine random polynomial systems; they also
estimate the expected number of real roots of a polynomial in the
Bernstein basis as $\sqrt{d}$, when the variance is ${d \choose i}$.
This left open the case, see also \cite{kowalski-amm-2006}, 
of smaller variance, that is polynomial and
not exponential in $d$.

Our second contribution is to examine
random polynomials in the Bernstein basis of degree $d$, 
with i.i.d.\ coefficients with mean zero and ``moderate'' variance 
$\Theta(1/ \sqrt{d/(i(d-i))})$, for $d>i>0$.
Indeed, we have $1 \geq  \sqrt{d/(i(d-i))} \geq 2/\sqrt{\pi d}$.
We prove that the expected number of real roots of these polynomials is
$\sqrt{2d}\pm\OO(1)$.
We conclude  with experimental results which corroborate our analysis,
and shows that these polynomials behave like polynomial with variance~1.  
This is the first step towards bounding the expected complexity of
solving polynomials in the Bernstein basis.

The rest of the paper is structured as follows.
First we specify our notation.
Sec.~\ref{sec:subdivision-solvers} and 
\ref{sec:average-complexity}
applies our expected-case analysis to estimating the
real root separation bound, and to estimating the complexity of \func{sturm} solver.
Sec.~\ref{sec:random-bernstein} determines the expected number of real roots of
random polynomial in the Bernstein basis and supports our bounds by experimental results.
The paper concludes with a discussion of open questions.

\textbf{Notation.}
\OB means bit complexity and the \sOB-notation means that we
are ignoring logarithmic factors.
For $A = \sum_{i=1}^{d}{a_i X^i} \in \ZZ[X]$, $\dg{A}$ denotes its degree.
\bitsize{A} denotes an upper bound on the bitsize of the
coefficients of $A$ (including a bit for the sign).
For $\rat{a} \in \QQ$, $\bitsize{ \rat{a}}\ge 1$ is 
the maximum bitsize of the numerator and the denominator.
$\Delta$ is the separation bound of $A$, 
that is the smallest distance between two (real or complex, depending on the context) roots of $A$.

\section{Subdivision-based solvers} \label{sec:subdivision-solvers}

In order to make the presentation self-contained, we present in some detail the
general scheme of the subdivision-based solvers.
The pseudo-code of a such a solver is found in Alg.~\ref{alg:Subdivision}.
Our exposition follows closely \cite{emt-lncs-2006}.

The input is a square-free polynomial $A \in \ZZ[x]$ 
and an interval $\interval{I}_0$, that contains the real roots of $A$
which we wish to isolate;  usually it contains all the positive real roots of $A$.  
In what follows, except if explicitly stated otherwise,
we consider only the roots (real and/or complex) of $A$ with
positive real part, since similar results could be obtained for roots with
negative real part  using the transformation $x \mapsto -x$.
Our goal is to compute rational numbers between the real roots of $A$ in
$\interval{I}_0$.

\begin{myalgorithm}[tbp] \scriptsize \dontprintsemicolon \linesnumbered
  \SetFuncSty{textsc} \SetKw{RET}{{\sc return}} \SetKw{OUT}{{\sc output \ }} 
  \SetVline \KwIn{Square-free $A\in \ZZ[x]$, $\interval{I}_0= [ 0, \rat{B}]$} 
  \KwOut{A list of isolating intervals for the real roots of $A$ in $\interval{I}_0$}
  
  \BlankLine

  $\FuncSty{initialization}_{\FuncSty{sm}}( A, \interval{\interval{I}}_0)$ \;
  
  $L \leftarrow \emptyset,\, Q \leftarrow \emptyset$, 
  $Q \leftarrow \FuncSty{push}( Q, \{A, \interval{I}_0\})$ \;
  
  \While{ $Q \neq \emptyset$}{
    \nllabel{alg:Subdivision-while-loop}
    
    $\{f, \interval{I}\} \leftarrow \FuncSty{pop}( Q)$\;
    $V \leftarrow \FuncSty{count}_{\FuncSty{sm}}( f, \interval{I})$ \;
    
    \Switch{ $V$ }{
      
      \lCase{ $V = 0$ }{ \KwSty{continue}\; }
      \lCase{ $V = 1$ }{ $L \leftarrow \FuncSty{ add}( L, \interval{I})$ \; }
      \Case{ $V > 1$ } {
        $\{f_L, \interval{I}_L\}, \{f_R, \interval{I}_R\} \leftarrow 
        \FuncSty{split}_{\FuncSty{sm}}( f, \interval{I})$ \;
        $Q \leftarrow \FuncSty{push}( Q, \{f_L, \interval{I}_L\})$,
        $Q \leftarrow \FuncSty{push}( Q, \{f_R, \interval{I}_R\})$ \;
      }
    }
  }
  \RET $L$ \;
  \caption{ $\func{subdivisionSolver}(A, \interval{I}_0)$ \newline}
  \label{alg:Subdivision}
\end{myalgorithm}

The algorithm uses a stack $Q$ that contains pairs of the form 
$\{f,\interval{I} \}$. The semantics are that we want to
isolate the real roots of $f$ contained in interval $\interval{I}$.
$\func{push}(Q, \{f, \interval{I}\})$
inserts the pair $\{f, \interval{I}\}$ to the top of stack $Q$
and $\func{pop}(Q)$ returns the pair at the top of the stack and
deletes it from $Q$.
$\func{add}(L, \interval{I})$ inserts 
$\interval{I}$ to the list $L$ of the isolating intervals.

There are 3 sub-algorithms with index \func{sm}, which have different
specializations with respect to the subdivision method applied,
namely \func{sturm}, \func{descartes}, or \func{bernstein}.
Generally, $\func{initialization}_{\func{sm}}$ does the necessary pre-processing, 
$\func{count}_{\func{sm}}(f, \interval{I})$ returns the number
(or an upper bound) of the real roots of $f$ in $\interval{I}$,
and $\func{split}_{\func{sm}}(f, \interval{I})$ splits
$\interval{I}$ to two equal subintervals and possibly modifies $f$.

The complexity of the algorithm depends on the number of times the {while}-loop
(Line~\ref{alg:Subdivision-while-loop} of Alg.~\ref{alg:Subdivision}) 
is executed and on the cost of $\func{count}_{\func{sm}}(f, \interval{I})$ and
$\func{split}_{\func{sm}}(f,\interval{I})$.
At every step, since we split the tested interval 
to two equal sub-intervals, we may assume that the bitsize of
the endpoints is augmented by one bit.
If we assume that the endpoints of $\interval{I}_0$ have bitsize $\tau$,
then at step $h$, the bitsize of the endpoints of
$\interval{I} \subseteq \interval{J}_0$ is $\tau + h$.  

Let $n$ be the number of roots with positive real part,
and $r$ the number of positive real roots, so $r\leq n \leq d$.
Let the roots with positive real part, be
$\alpha_j = \Re(\alpha_j) + \ii \, \Im(\alpha_j)$, where $1 \leq j \leq n$
and the index denotes an ordering on the real parts.
Let $\Delta_{i}$ be the smallest distance between $\alpha_i$ and another
root of $A$, and $s_i = \bitsize{\Delta_i}$.
Finally, let the separation bound, i.e.\ the smallest distance between two
(possibly complex) roots of $A$ be $\Delta$ and its bitsize be 
$s= \bitsize{\Delta}$.

\subsection{Upper root bound}
\label{sec:bound}

Before applying a subdivision-based algorithm, we should compute a bound,
$\rat{B}$, on the (positive) roots. We will express this bound as a function of the bitsize
of the separation bound and the degree of the polynomial.
There are various bounds for the  roots of a polynomial,
e.g.\ \cite{Yap2000,Hong:jsc:98,MignotteStefanecu},
and references therein.
For our analysis we use the following bound \cite{Hong:jsc:98} on the positive
real parts of the roots,  
$ \rat{ B} = \Floor{ 2 \max_{a_i < 0} \min_{a_k>0, k >i}{ \Abs{ \frac{a_i}{a_d}}^{1/(k-i)}}}$,
for which we have the estimation \cite{Hong:jsc:98,sharma-tcs-2008} 
$\alpha_r \leq \Re(\alpha_{n}) < \rat{ B} < \frac{8d}{\ln{2}} \Re(\alpha_{n})$.
The bound can be computed in $\sOB( d^2 \tau)$.

If we multiply the polynomial by $x$, then 0 is a root.
By definition of $s$, we have $|\log(|\Re(\alpha_i) - \Re(\alpha_j)|)| \leq s$,
for any $i \not= j$. 
Hence, we have the following inequalities 
\begin{displaymath}
  \begin{array}{rclcrr}
    \Re(\alpha_1) & - & 0 & \leq & 2^{s} \\
    \Re(\alpha_2) & - & \Re(\alpha_1) & \leq & 2^{s} \\
                 &&       &\vdots& \\
    \Re(\alpha_{n-1}) & - & \Re(\alpha_{n-2}) & \leq & 2^{s} \\
    \Re(\alpha_{n})   & - & \Re(\alpha_{n-1}) & \leq &  2^{s}&(+)\\ \hline
     && \Re(\alpha_n) & \leq & n \, 2^{s}
  \end{array}
\end{displaymath}
Thus, we have
$  \rat{ B} < \frac{8d}{\ln{2}} \Re(\alpha_{n}) < \frac{8d}{\ln{2}} n 2^{s}
  < 16 \, d^2\, 2^{s} < d^2 \, 2^{4 + s}$.
Hence, we can deduce that $\bitsize{ \rat{ B}} = \OO( s + \lg{d})$.

\begin{lemma}
  \label{lem:bound}
  Let $A \in \ZZ[x]$, where $\dg{ A} = d$ and $\bitsize{ A} = \tau$.
  We can compute a bound, $\rat{ B}$, on the positive real parts of the roots of $A$, for
  which it holds $\rat{ B} < d^2 \, 2^{4 + s}$, and $\bitsize{ \rat{ B}} = \OO( s + \lg{d})$.
\end{lemma}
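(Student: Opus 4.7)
The plan is essentially to formalize the telescoping argument that already appears in the paragraph preceding the lemma. Since the computable upper bound $\rat{B}$ from Hong's formula (as in \cite{Hong:jsc:98}) satisfies
\[
  \Re(\alpha_{n}) < \rat{B} < \frac{8d}{\ln 2}\,\Re(\alpha_{n}),
\]
the real task is to bound $\Re(\alpha_n)$ in terms of the separation bitsize $s$ and the degree $d$.

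First, I would multiply $A$ by $x$ (which does not change the separation of the nonzero roots and only adds $0$ as a root) so that $0$ is guaranteed to belong to the set of roots. Then I would order the roots with positive real part as $\alpha_1,\dots,\alpha_n$ by increasing real part, and observe that by the very definition of the (global) separation bound $\Delta = 2^{-s}$, any two distinct roots $\alpha_i \neq \alpha_j$ of $A$ satisfy
\[
  |\Re(\alpha_i)-\Re(\alpha_j)| \le |\alpha_i-\alpha_j|^{-1} \cdot |\alpha_i-\alpha_j|^{2} \le 2^{s},
\]
or more simply $|\Re(\alpha_i)-\Re(\alpha_j)| \le |\alpha_i-\alpha_j| \le 2^{s}$ once one notes the upper estimate on pairwise distances implicit in the normalization (the argument in the text just uses $|\log|\Re(\alpha_i)-\Re(\alpha_j)||\le s$). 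Telescoping the successive differences $\Re(\alpha_{k})-\Re(\alpha_{k-1}) \le 2^{s}$ from $\Re(\alpha_1)-0$ up to $\Re(\alpha_n)-\Re(\alpha_{n-1})$ yields $\Re(\alpha_n)\le n\cdot 2^{s}$.

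Combining this with Hong's estimate gives
\[
  \rat{B} < \frac{8d}{\ln 2}\, n \, 2^{s} \le 16\, d^{2}\, 2^{s} = d^{2}\, 2^{4+s},
\]
using $n\le d$ and $8/\ln 2 < 16$. Taking base-$2$ logarithms delivers $\bitsize{\rat{B}} = O(s+\lg d)$, as claimed. The bit-complexity $\sOB(d^2\tau)$ for computing $\rat{B}$ itself is standard from the cited references and only requires $O(d^2)$ comparisons of rationals of bitsize $O(\tau)$.

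The only mildly delicate point is the passage $|\Re(\alpha_i)-\Re(\alpha_{i-1})|\le 2^{s}$: one has to be careful that $s$ controls pairwise \emph{root} distances (not just real parts), but since $|\Re(\alpha_i)-\Re(\alpha_j)|\le |\alpha_i-\alpha_j|$ and the latter is bounded above by the diameter of the root set which in turn is $\le 2^{s}$ by the convention used throughout the paper, the bound carries over verbatim. Everything else is a one-line calculation, so no serious obstacle is expected.
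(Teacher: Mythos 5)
Your argument is essentially the paper's own derivation, which appears verbatim in the text preceding the lemma: Hong's bound $\Re(\alpha_n) < \rat{B} < \tfrac{8d}{\ln 2}\,\Re(\alpha_n)$ computable in $\sOB(d^2\tau)$, multiplication by $x$ so that $0$ is a root, telescoping the consecutive real-part gaps (each bounded by $2^{s}$) to obtain $\Re(\alpha_n)\le n\,2^{s}$, and then $n\le d$ together with $8/\ln 2<16$ to get $\rat{B}<d^2\,2^{4+s}$ and $\bitsize{\rat{B}}=\OO(s+\lg d)$. The one step you flag as delicate, the upper bound $2^{s}$ on pairwise differences, is asserted in the paper just as baldly ($|\log|\Re(\alpha_i)-\Re(\alpha_j)||\le s$ ``by definition of $s$''), so your appeal to the bitsize convention mirrors the paper's treatment; note only that your intermediate expression $|\alpha_i-\alpha_j|^{-1}\cdot|\alpha_i-\alpha_j|^{2}$ is identically $|\alpha_i-\alpha_j|$ and adds nothing.
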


\begin{remark}
  In the worst case, the asymptotics of, more or less, 
  all root bounds in the literature,
  e.g.\ \cite{Yap2000,Hong:jsc:98,MignotteStefanecu}, 
  are same,  since 
  $\rat{ B} \leq \max_{i}{ \abs{ a_i}} \leq 2^{\tau}$,
  and $\bitsize{ \rat{ B}} \leq \tau$.
  However, it is very important in practice to have good initial bounds.
  Good initial estimations of the roots can speed up the
  implementation by 20\%  \cite{Krandick:Isolation}. 
\end{remark}

\section{On expected complexity} 
\label{sec:average-complexity}

Expected complexity aims to capture and quantify the property for an algorithm to 
be fast for most inputs and slow for some rare instances of these inputs. 
Let $E$ denote the set of inputs, and assume it is equipped with a probability 
measure $\mu$; then let $c(I)$ denote the usual worst-case 
complexity of the considered algorithm for input $I$. 
By definition, the expected complexity is the integral $\int_E c(I) \mu(I)$. 

In our setting the set $E$ depends on a parameter $d$ (the degree of the input 
polynomial), and we are interested in the asymptotic expected
complexity when $d$ tends to infinity. 
Each $E_d$ is equipped with a probability measure $\mu_d$ (also called 
distribution) of the sequence of the (normalized) coefficients of the input 
polynomial and we consider the cases where there exists a limit distribution. 

\subsection{Strategy and Independence} 
A natural strategy is to decompose $E_d$ into two subsets $G_d$ and $R_d$ (G 
stands for generic and R for rare), such that $c(I)$ is small for $I\in G_d$ 
while $\mu_d(I)$ is very small for $I\in R_d$ and moreover the two partial 
integrals $\int_{G_d} c(I) \mu_d(I)$ and $\int_{R_d} c(I) \mu_d(I)$ are balanced or 
at least both small.

We face another difficulty. Classical properties and estimates in 
Probability theory are often expressed for a sequence of independent variables 
(i.i.d.) but most natural bijective transformations performed in Computer 
Algebra do not respect independence. For instance, if $X$ and $Y$ are independent 
random variables, then $U:=X+Y$ and $V:=X-Y$ are not independent.
In our setting, even if we consider a model of distribution of coefficients which 
assumes that they are i.i.d., then this does not imply that the roots are 
i.i.d.\ and we cannot apply usual tools or estimates. 
However, as we are interested in asymptotic behavior, for some models of 
distribution of coefficients it happens that the limit distribution of the 
roots behave almost like a set of independent variables, i.e.\ they
have very weak correlation. So we can invoke general 
classical estimates for our analysis.

When this is not the case, a useful tool is the two-point, or multi-point, 
correlation function. They express the defect of 
independence between a set of random variables and classically serve,
e.g., to compute standard deviations.

Hereafter, we restrict ourselves to models of 
distribution of coefficients, hence induced distribution of roots, for which the 
corresponding probability measures and correlation functions have already been 
studied. Hopefully these models will provide good approximations for the 
situations encountered in the many applications.  

\subsection{$SO(2)$ polynomials}
\label{sec:so2-polys}

We consider the univariate polynomial $A = \sum_{i=0}^d{a_i x^i}$, the
coefficients of which are i.i.d. normals with mean zero and variances 
${d  \choose i}$, where $0 \leq i \leq d$.
Alternatively, we could consider $A$ as $A = \sum_{i=0}^{d}\sqrt{d  \choose i} \,a_i \,x^i$,
where $a_i$ are i.i.d. standard normals.
These polynomials are considered by Edelman and Kostlan \cite{EdeKos-BAMS-95}
to be ``the more natural definition of a random polynomial''.
They are called $SO(2)$  because the joint probability
distribution of their zeros is $SO(2)$ invariant, after homogenization.
In \cite{sm-jsp-2008} they are called {\em binomial}.
Let $\rho(t) = \frac{\sqrt{d}}{\pi(1+t^2)}$ be the {true density function},
i.e.\ the expected number of real zeros per unit length at a point $t \in \RR$.   
The expected number $r$ of real roots of $A$ is given by
$r = \int_{\RR}{\rho(t) dt} = \sqrt{d}$ \cite{EdeKos-BAMS-95}.  
Let $\alpha_j$ be the real roots of $A$ in their natural ordering, where $1 \leq j \leq r$.

We define the straightened zeros of $A$ as
\begin{displaymath}
  \zeta_j = \mathcal{P}(\alpha_j) = \sqrt{d} \; \arctan(\alpha_j) / \pi,
\, j=1,\dots,r,
\end{displaymath}
in bijective correspondence with the real roots $\alpha_j$ of the random polynomial,
where $\mathcal{P}(t) = \int_0^{t}{\rho(u)\; du}$.
Moreover, the ordering is preserved.
The straightened zeros are uniformly distributed on the circle of length $2\sqrt{d}$
\cite[sec.5]{bd-jsp-1997}.
This is a strong property and implies that the joint probability distribution density
function of two, resp.\ $m$, (distinct) straightened zeros coincides with their 2-point,
resp.\ $m$-point, correlation function \cite{bd-jsp-1997}.

\begin{proposition} {\rm \cite[Thm.~5.1]{bd-jsp-1997}}
  Following the previous notation,
  as $d \rightarrow \infty$ the limit 2-point correlation of the
  straightened zeros is
  $k(s_1,s_2)\rightarrow \pi^2 |s_1-s_2| /4$, when $s_1-s_2 \rightarrow 0$.
\end{proposition}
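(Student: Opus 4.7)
The plan is to apply the Kac--Rice formula for the 2-point correlation function of zeros of a Gaussian process, then exploit the explicit covariance kernel of $SO(2)$ polynomials to pass to the large-$d$ limit, and finally extract the small-separation asymptotics.

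First I would write down the Kac--Rice identity: for a centered smooth Gaussian process $A$, the 2-point zero correlation on $\RR$ is
\[
R_2(t_1,t_2) \;=\; p_{A(t_1),A(t_2)}(0,0)\,E\bigl[\,|A'(t_1)\,A'(t_2)|\,\big|\,A(t_1)=A(t_2)=0\,\bigr],
\]
an expression built entirely from the covariance kernel $K(t_1,t_2):=E[A(t_1)A(t_2)]$ and its mixed derivatives up to second order. Since the $SO(2)$ coefficients have variances $\binom{d}{i}$, a direct computation gives $K(t_1,t_2) = (1+t_1 t_2)^d$.

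Next I would transport the formula to the straightened coordinates via $s_j = \mathcal{P}(t_j)$. By construction $\mathcal{P}$ pushes $\rho$ to the uniform measure on the circle of length $2\sqrt{d}$, so $k(s_1,s_2) = R_2(t_1,t_2)/(\rho(t_1)\rho(t_2))$, and the $SO(2)$ invariance of the joint distribution of zeros forces $k(s_1,s_2) = \tilde{k}(s_1 - s_2)$. Substituting $(1+t_1t_2)^d$ into Kac--Rice and zooming in on a fixed difference $\sigma := s_1 - s_2$ of order $1$ on the straightened scale, the rescaled kernel converges as $d \to \infty$ to a universal Gaussian-type kernel of the form $\exp(-c\,\sigma^2)$. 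This is the standard scaling limit for zeros of Gaussian analytic functions on the sphere, under which all correlation functions become translation-invariant.

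Finally I would expand $\tilde{k}(\sigma)$ as $\sigma \to 0$. Conditionally on $A(t_1)=A(t_2)=0$, the covariance of $(A'(t_1),A'(t_2))$ in the limit model degenerates linearly in $\sigma$, so the conditional expectation of $|A'(t_1)A'(t_2)|$ vanishes to first order in $\sigma$; the joint density $p_{A(t_1),A(t_2)}(0,0)$ tends to a finite nonzero limit. Tracking the factors of $\pi$ and $\sqrt{d}$ that enter through $\rho(t)=\sqrt{d}/(\pi(1+t^2))$ and through the Jacobian of $\mathcal{P}$, one collects the overall constant $\pi^2/4$. The main obstacle is the joint control of the two limits $d \to \infty$ and $\sigma \to 0$: one must show that the small-$\sigma$ expansion of the Kac--Rice integrand is uniform in $d$ in a neighborhood of the diagonal, so that the linear term survives the passage to the universal kernel. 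This uniformity, together with the resulting identification of the leading coefficient, is exactly the content of the cited Thm.~5.1.
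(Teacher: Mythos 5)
First, note that the paper itself offers no proof of this proposition: it is quoted verbatim from Bleher and Di \cite[Thm.~5.1]{bd-jsp-1997}, so there is no in-paper argument to compare against. Your general strategy (Kac--Rice two-point intensity, the explicit covariance kernel $K(t_1,t_2)=(1+t_1t_2)^d$ of the $SO(2)$ ensemble, normalization by $\rho(t_1)\rho(t_2)$ in the straightened coordinates, and the scaling limit in which the normalized kernel becomes the translation-invariant Gaussian $e^{-\pi^2\sigma^2/2}$) is indeed the standard route to such a statement, and up to that point your outline is sound.

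However, the small-$\sigma$ analysis -- which is where the constant $\pi^2/4$ actually comes from -- is wrong as stated, and the constant is asserted rather than derived. You claim that $p_{A(s_1),A(s_2)}(0,0)$ tends to a finite nonzero limit while the conditional expectation of $|A'(s_1)A'(s_2)|$ vanishes to first order in $\sigma=s_1-s_2$. In fact the opposite degeneration occurs: as $\sigma\to 0$ the correlation of $A(s_1)$ and $A(s_2)$ tends to $1$, so the covariance matrix degenerates and the joint density at $(0,0)$ blows up like $\frac{1}{2\pi\sqrt{1-r(\sigma)^2}}\sim\frac{1}{2\pi^2|\sigma|}$ (using $r(\sigma)=e^{-\pi^2\sigma^2/2}$, so $1-r^2\sim\pi^2\sigma^2$). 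Meanwhile, conditioning on the two zeros forces, by a Rolle/Taylor argument, $A'(s_1)\approx-\tfrac{\sigma}{2}A''$ and $A'(s_2)\approx\tfrac{\sigma}{2}A''$, so the conditional expectation of $|A'(s_1)A'(s_2)|$ vanishes to \emph{second} order, $\approx\tfrac{\sigma^2}{4}\,\mathbb{E}[(A'')^2\mid A=0]=\tfrac{\sigma^2}{4}\cdot 2\pi^4$. The product of the two factors is $\tfrac{\pi^4\sigma^2}{2}\cdot\tfrac{1}{2\pi^2|\sigma|}=\tfrac{\pi^2|\sigma|}{4}$, which is exactly the claimed limit; with your accounting (finite density times a linearly vanishing conditional expectation) the linear order comes out by accident but the coefficient cannot. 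Finally, your last sentence defers the uniform interchange of the limits $d\to\infty$ and $\sigma\to 0$ to ``the content of the cited Thm.~5.1,'' which is circular if the goal is to prove that theorem; either carry out the Kac--Rice computation at finite $d$ with the exact kernel $(1+t_1t_2)^d$ and take $d\to\infty$ afterwards, or establish the required uniformity explicitly.
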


Let $\Delta(\alpha) = \min_{1 \leq i <r}\{\alpha_{i+1} - \alpha_{i}\}$
and $\Delta(\zeta) = \min_{1 \leq i <r}\{\zeta_{i+1} - \zeta_{i}\}$
be the separation bound of the real roots of $A$ and the straightened
zeros, respectively.
We consider each straightened zero uniformly distributed on a
straight-line interval of length $2\sqrt{d}$. For two such zeros, we
can consider one horizontal and one vertical such interval, defining
a square, which represents their joint probability space.
Since the real roots are naturally ordered, if
two of them lie in a given infinitesimal interval, they must be consecutive.

Let $Z$ be a zone bounded above and below by a diagonal at vertical
distance $l$ from the main diagonal of the unit square.  The
probability $\Pr[\Delta(\zeta) \leq l]$ that there exist two zeros
lying in a given interval of infinitesimal length $l$ tends to the
integral of $k(s_1,s_2)$ over the straightened zeros lying in $Z$, as
$d\rightarrow\infty$:
\begin{displaymath}
  \begin{array}{rl}
    \Pr[\Delta(\zeta) \leq l] \rightarrow & \int_{Z} k(s_1,s_2) \,ds_1\,ds_2  \\
    =&2 \int_0^{2\sqrt{d}} ds_1\, \int_{s_1}^{s_1+l}  k(s_1,s_2) \,ds_2  \\
    =&\frac{\pi^2}{2} \int_0^{2\sqrt{d}} ds_1\, \int_{s_1}^{s_1+l} \abs{s_1 - s_2} \,ds_2=
    \frac{\pi^2 \, \sqrt{d}}{2} \, l^2 ,
  \end{array}
\end{displaymath}
where the first integral is over all straightened zeros, which lie in an
interval of size $2\sqrt{d}$.  
Notice that $k(s_1, s_2)$ is essentially the joint probability density function of
two real roots.
Using Markov's inequality, e.g.~\cite{papoulis-book-1991} we have 
$\Pr[\Delta(\zeta) \geq l] \leq \E[\Delta(\zeta)]/l$, so 
\begin{displaymath}
  \E[\Delta(\zeta)] \geq l \,\Pr[\Delta(\zeta) \geq l]
  = l - l \,\Pr[\Delta(\zeta) < l] > l - \frac{\pi^2 \, \sqrt{d}}{2} \, l^3 .
\end{displaymath} 
This bounds the asymptotic expected separation conditioned on the hypothesis that
it tends to zero, as $d\rightarrow\infty$.
If we choose $l = 1/(d^{c} \tau)$, where $c\geq 1$ is a (small) constant,
which is in accordance with the assumption of $l\rightarrow 0$, 
then 
$\E[\Delta(\zeta)] > \tfrac{1}{d^{c} \tau} -\tfrac{\pi^2}{2 \,d^{3c-1/2} \tau^3}$.

\vspace{-10pt}
\begin{displaymath}
  \E[\Delta(\zeta)] = \E[ \min_{1\leq i < r}\{ \zeta_{i+1} - \zeta_i\}] =
\end{displaymath} 
\vspace{-19pt}
\begin{displaymath}
  \frac{\sqrt{d}}{\pi} \,\E[ \min_{1\leq i < r}\{ \arctan( \alpha_i) - \arctan(\alpha_{i+1}) ] =
\end{displaymath}
\vspace{-15pt}
\begin{displaymath}
  \frac{\sqrt{d}}{\pi} \,
  \E[ \min_{1\leq i < r}\{ \arctan\Paren{\frac{\alpha_i- \alpha_{i+1}}{1+\alpha_i\, \alpha_{i+1}}} \}]  
  > \frac{1}{d^{c} \tau} - \frac{\pi^2}{2 \, d^{3c-1/2} \tau^3} \Leftrightarrow
\end{displaymath} 
\vspace{-7pt}
\begin{displaymath}
    \E[ \min_{1\leq i < r}\{ \arctan\Paren{\frac{\alpha_i- \alpha_{i+1}}{1+\alpha_i\, \alpha_{i+1}}} \}]  
  > \frac{\pi}{d^{c+1/2} \tau} - \frac{\pi^3}{2 \, d^{3c} \tau^3} .
\end{displaymath}
Function $\arctan$ is strongly monotone, and $1 + \alpha_i \alpha_{i+1} \geq 1$,
for all $i$, except where $\alpha_i$ is the largest negative root and
$\alpha_{i+1}$ is the smallest positive root. But we can
treat this case separately, since zero is an obvious separation point. 
\begin{eqnarray*}
  \E[ \min_{1\leq i < r}\{ \alpha_i- \alpha_{i+1} \}]  
   \geq  
   \E[ \min_{1\leq i < r}\{ \Paren{\frac{\alpha_i-\alpha_{i+1}}{1+\alpha_i\, \alpha_{i+1}}} \}] >
   \\ >   
   \tan( \frac{\pi}{d^{c+1/2} \tau} - \frac{\pi^3}{2 \, d^{3c} \tau^3})
  \geq
  \frac{\pi}{d^{c+1/2} \tau} - \frac{\pi^3}{2 \, d^{3c} \tau^3} ,
\end{eqnarray*}
where the latter inequality follows from the series expansion
$\tan x=x+x^3/3+\cdots$ for $x\in(0,\pi/2)$.

\begin{lemma}
  \label{lem:real-avg-sep-s02}
  Let $A \in \ZZ[x]$ of degree $d$, the coefficients of which are
  i.i.d.\ variables that follow a normal distribution with variances
  ${d \choose i}$, then for
  the expected value of the separation bound of the real roots it holds 
  $\E[\Delta] > \frac{\pi}{d^{c+1/2} \tau} - \frac{\pi^3}{2 \, d^{3c} \tau^3}$,
  for a constant $c\geq 1$,
  and 
  $\E[s] = \E[\bitsize{\Delta}] = \OO(\lg{d} + \lg{\tau})$.
\end{lemma}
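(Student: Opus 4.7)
The plan is to translate the sought statement about the separation of the real roots $\alpha_i$ into a statement about the separation of the straightened zeros $\zeta_i$, for which we have a tractable joint distribution via the $2$-point correlation function of Bleher--Di, and then invert the transformation $\mathcal{P}$ to transfer the bound back to the $\alpha_i$.

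First I would exploit the fact that the $\zeta_i$ are uniformly distributed on the interval of length $2\sqrt{d}$ so that, by the quoted Proposition, the joint density of two of them coincides with $k(s_1,s_2)$, which converges to $\pi^2|s_1-s_2|/4$ as $s_1-s_2 \to 0$. A pair of consecutive $\zeta_i$ achieves the minimum gap exactly when they fall in a diagonal zone $Z$ of width $l$, so integrating $k$ over $Z$ gives $\Pr[\Delta(\zeta)\leq l] \to \tfrac{\pi^2\sqrt{d}}{2}\,l^2$. Feeding this into Markov's inequality in the form $\E[\Delta(\zeta)] \geq l - l\Pr[\Delta(\zeta)<l]$ and choosing $l = 1/(d^{c}\tau)$ with $c\geq 1$ yields
\[
  \E[\Delta(\zeta)] > \frac{1}{d^{c}\tau} - \frac{\pi^2}{2\, d^{3c-1/2}\tau^3},
\]
which is consistent with the assumption $l \to 0$ used when invoking the limiting correlation.

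Next I would pull this estimate through the bijection $\mathcal{P}$. By definition $\zeta_{i+1}-\zeta_i = \tfrac{\sqrt{d}}{\pi}(\arctan(\alpha_{i+1})-\arctan(\alpha_i))$, and the arctangent subtraction formula rewrites the right-hand side in terms of $\arctan\bigl((\alpha_{i+1}-\alpha_i)/(1+\alpha_i\alpha_{i+1})\bigr)$. Since $\arctan$ is monotone on the relevant range and $1+\alpha_i\alpha_{i+1}\geq 1$ except across the sign change of the roots, which I would handle separately by using $0$ as an explicit separation point, applying $\tan$ and its series expansion $\tan x = x + x^3/3 + \cdots$ on $(0,\pi/2)$ produces the asserted lower bound on $\E[\Delta]$ for the real roots.

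The bitsize statement then follows routinely: $\E[s] = \E[-\lg \Delta] \leq -\lg \E[\Delta]$ via Jensen's inequality (applied to the concave function $-\lg$), and the explicit lower bound on $\E[\Delta]$ immediately gives $\E[s] = O(\lg d + \lg\tau)$. The main obstacle in executing this plan is justifying the passage from the asymptotic $2$-point correlation to an actual probability bound: I have to argue that, for consecutive roots and for the specific choice $l = 1/(d^{c}\tau) \to 0$, the contribution of non-consecutive pairs and the error in the limiting correlation are absorbed into the second-order term, and that the event ``$\Delta(\zeta)\leq l$'' is faithfully represented by the diagonal zone $Z$ (which uses the natural ordering of the $\alpha_i$ and the order-preservation of $\mathcal{P}$). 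The handling of the root nearest $0$, where $1+\alpha_i\alpha_{i+1}$ may be small or negative, is also a place where care is needed; splitting at $0$ and treating each side independently bypasses this issue cleanly.
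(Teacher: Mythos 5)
Your derivation of the lower bound on $\E[\Delta]$ is essentially the paper's own argument: same use of the uniform distribution of the straightened zeros, the same integration of the limiting $2$-point correlation over the diagonal zone $Z$, the same Markov step with $l=1/(d^c\tau)$, and the same pull-back through the arctangent subtraction formula, treating the sign change at $0$ separately and using $\tan x\ge x$ on $(0,\pi/2)$. That part is fine and matches the paper.

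The bitsize claim, however, is justified by a step that fails. You write $\E[s]=\E[-\lg\Delta]\le -\lg\E[\Delta]$ ``via Jensen's inequality applied to the concave function $-\lg$,'' but $-\lg$ is \emph{convex}, so Jensen gives the reverse inequality $\E[-\lg\Delta]\ge -\lg\E[\Delta]$, which is useless here. More fundamentally, no bound of this shape can work: a lower bound on $\E[\Delta]$ alone does not control $\E[-\lg\Delta]$ from above, since a distribution can have large mean while still putting enough mass near $0$ to make $\E[-\lg\Delta]$ arbitrarily large. The correct route is the conditioning argument the paper uses later for the complexity analysis: split on the event $\{\Delta\ge l\}$, on which $s\le\lg(1/l)=\OO(\lg d+\lg\tau)$, and the complementary event $\{\Delta< l\}$, which by your own computation has probability $\Theta(\sqrt{d}\,l^2)=\Theta\bigl(1/(d^{2c-1/2}\tau^2)\bigr)$ and on which one invokes the worst-case bound $s=\OO(d\tau)$; the product of the small probability with the worst-case bitsize is $\OO(1)$ for $c\ge 1$, giving $\E[s]=\OO(\lg d+\lg\tau)$. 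Replacing the Jensen step by this split closes the gap.
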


\subsection{Weyl polynomials}
\label{sec:weyl-polys}

We consider random polynomials, known as Weyl polynomials, 
which are of the form 
$$
A = \sum_{i=0}^{d}{ a_i x^i / \sqrt{i!}},
$$
where the coefficients $a_i$ are independent standard normals. 
Alternatively, we could consider $A$
as $A = \sum_{i=0}^{d}{ a_i x^i}$, where $a_i$ are normals of mean
zero and variance $1/\sqrt{i!}$. 
The density of the real roots of Weyl polynomials is 
\begin{displaymath}
  \rho(t) = \frac{1}{\pi} \sqrt{
    1 + \frac{t^{2d}(t^2-d-1)}{ e^{t^2} \Gamma(n+1,t^2)} -  
    \frac{t^{4d+2}}{ (e^{t^2} \Gamma(n+1,t^2))^2}
    },
\end{displaymath}
where $\Gamma$ is the incomplete gamma function.
The expected number of real roots is
$r = \int_{\RR}{ \rho(t) dt} \sim \frac{2}{\pi} \sqrt{d}$ \cite{sm-jsp-2008},  
where the higher order terms of the number of real roots are not
explicitly known up to now.

The asymptotic density, for $d \rightarrow \infty$, is
\begin{equation}
  \rho(t) =
  \begin{cases}
    \pi^{-1},              &  \abs{t} \ll \sqrt{d} \\
    \frac{d}{\pi t^{2}},   & \abs{t} \gg \sqrt{d} 
  \end{cases}
  \label{eq:weyl-asympt-rho}
\end{equation}

A useful observation is that the density of the real roots of the
Weyl polynomials is similar to the density of the real eigenvalues of
Ginibre random matrices, that is $d \times d$ matrices with elements
Gaussian i.i.d. random variables \cite{EdeKos-BAMS-95,sm-jsp-2008}.

We consider only the real zeros of $A$ that are inside the disc
centered at the origin with radius  $\sqrt{d}$ since outside the disc
there is only a constant number of them.
In this case the density is represented by the first branch of
(\ref{eq:weyl-asympt-rho}). 

We work as in the case of the $SO(2)$ polynomials.
Now $\mathcal{P}(t) = \int_0^{t}{ \rho(u) du} = t/\pi$.
The straightened zeros, $\zeta_i$, are given by 
\begin{displaymath} 
  \zeta_i = \mathcal{P}( \alpha_i) = \alpha_i / \pi,
\end{displaymath}
and they are uniformly distributed in $[0, \sqrt{d}/\pi]$ {\rm \cite{sm-jsp-2008}}.
The joint probability distribution density
function of two straightened zeros coincides with their 2-point correlation function.

\begin{proposition} {\rm \cite{sm-jsp-2008}}
  Under the previous notation,
  as $d \rightarrow \infty$ the limit 2-point correlation of the
  straightened zeros is
  $w(s_1,s_2)\rightarrow |s_1-s_2| /(4 \pi)$, when $s_1-s_2 \rightarrow 0$.
\end{proposition}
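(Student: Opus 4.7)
\medskip
\noindent
\textbf{Proof proposal.}
The strategy is to apply the Kac--Rice formula at the two-point level to the Gaussian process $A(t)$ defined by the Weyl polynomial, then pass to the bulk scaling limit $d\to\infty$ and expand for short separations. First I would compute the covariance kernel
\[
K_d(s,t)=\E[A(s)A(t)]=\sum_{i=0}^{d}\frac{(st)^i}{i!}.
\]
In the bulk $|t|\ll\sqrt{d}$ this kernel is asymptotically translation invariant after a gauge transformation that does not affect the real zero locus, which is precisely the regime in which the density (\ref{eq:weyl-asympt-rho}) is constant $\rho=1/\pi$. The straightening $\zeta=\alpha/\pi$ is the linear rescaling associated to this density, and its Jacobian $\pi^2$ simply multiplies the 2-point density of real zeros of $A$.

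Second, the Kac--Rice formula for the 2-point correlation of real zeros of a centred real Gaussian process reads
\[
\rho_2(t_1,t_2)=\int_{\RR^2}|x_1\,x_2|\,p_{(t_1,t_2)}(0,0,x_1,x_2)\,dx_1\,dx_2,
\]
with $p_{(t_1,t_2)}$ the joint Gaussian density of $(A(t_1),A(t_2),A'(t_1),A'(t_2))$, determined by the $4\times 4$ covariance matrix $\Sigma(t_1,t_2)$ whose entries are $K_d$ together with its first and mixed second derivatives. Integrating out $(x_1,x_2)$ yields the standard closed form: the product of the Gaussian density of $(A(t_1),A(t_2))$ at zero and the conditional moment $\E[|A'(t_1)A'(t_2)|\mid A(t_1)=A(t_2)=0]$, which for a centred Gaussian pair evaluates to a classical expression in the conditional standard deviations and correlation.

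Finally, I would Taylor expand every entry of $\Sigma$ in $u:=t_1-t_2$ around a fixed bulk base point and extract the leading behaviour of the resulting closed form as $u\to 0$; converting to the straightened variables $s_i=t_i/\pi$ multiplies by the Jacobian $\pi^2$, which together with the local coefficients of the expansion should recover $w(s_1,s_2)\sim|s_1-s_2|/(4\pi)$, in analogy with the $SO(2)$ computation preceding Prop.~2.3. The main obstacle is the removable $0/0$ at $u=0$: both $\det\Sigma$ and the conditional covariance of $(A'(t_1),A'(t_2))$ degenerate, so the Gaussian density factor blows up while the conditional absolute moment vanishes in compensating ways, and extracting the finite residue requires a careful joint expansion of all $\Sigma$-entries to sufficient order together with a uniformity argument in $d$ allowing the limits $d\to\infty$ and $u\to 0$ to be interchanged safely inside the bulk.
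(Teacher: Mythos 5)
You should first note that the paper does not prove this statement at all: it is quoted verbatim from the cited source [sm-jsp-2008] (Schehr--Majumdar), so there is no in-paper argument to compare against. Your Kac--Rice strategy (two-point intensity of real zeros of the Gaussian process $A(t)$, bulk limit of the covariance $\sum_i (st)^i/i!\to e^{st}$, gauge normalization to the stationary kernel $e^{-(s-t)^2/2}$, then a short-distance expansion) is indeed the standard route and essentially what the cited source does, so the overall plan is sound.

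However, as written the proposal has a genuine gap: the entire content of the proposition is the linear vanishing \emph{with the specific constant}, and you never extract that constant --- the decisive joint expansion of the $4\times 4$ covariance matrix at the degenerate point $u=0$ is only announced ("I would Taylor expand\,\dots\ should recover"), and you yourself defer the $0/0$ resolution to future care. Moreover, your final bookkeeping does not cohere with the target formula. Carrying out your own plan in the bulk gives, for the stationary kernel $c(u)=e^{-u^2/2}$ with spectral moments $\lambda_0=\lambda_2=1$, $\lambda_4=3$, the small-gap two-point intensity in the \emph{original} variable
\begin{displaymath}
  \rho_2(u)\;\approx\;\frac{\lambda_0\lambda_4-\lambda_2^2}{8\pi\,\lambda_0^{3/2}\lambda_2^{1/2}}\,|u|\;=\;\frac{|u|}{4\pi},
\end{displaymath}
i.e.\ the constant $1/(4\pi)$ already appears before straightening. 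If you now multiply by the Jacobian $\pi^2$ and substitute $u=\pi|s_1-s_2|$, as your last step prescribes, you land on $\pi^2|s_1-s_2|/4$ for the straightened zeros, not $|s_1-s_2|/(4\pi)$ --- a discrepancy of $\pi^3$. So either the convention for $w$ in the cited source (and in the proposition) refers to the unstraightened separation, in which case your Jacobian step must be dropped, or the straightened-zero normalization must be pinned down differently; in either case your argument as stated would not produce the claimed formula. To turn the sketch into a proof you need to (i) actually perform the expansion above (the conditional second-derivative variance $\lambda_4-\lambda_2^2/\lambda_0$ and the determinant $\lambda_0\lambda_2u^2$ are the two ingredients), (ii) fix the normalization convention consistently with how $w$ is used later in the paper's integral over $[0,\sqrt{d}/\pi]$, and (iii) justify the interchange of $d\to\infty$ with $u\to 0$, which you correctly identify but do not address.
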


Working as in the case of the $SO(2)$ polynomials, the probability
$\Pr[\Delta(\zeta) \leq l]$ that there exist two roots lying in  
a given interval of infinitesimal length $l$ tends to the integral of $w(s_1,s_2)$
over the straightened zeros lying in $Z$, as $d\rightarrow\infty$:
\begin{displaymath}
  \begin{array}{rl}
    \Pr[ \Delta(\zeta) \leq l]  = & \int_Z{ w(s_1,s_2) ds_1 ds_2} \\
    =&\int_0^{\sqrt{d}/\pi} \int_{s_1-l}^{s_1+l}  w(s_1,s_2) \,ds_1\,ds_2
    = \frac{l^2 \sqrt{d}}{4 \pi^2},
  \end{array}
\end{displaymath}
and using Markov's inequality
\begin{displaymath}
  \Pr[\Delta(\zeta) \geq l] \leq \E[\Delta(\zeta)]/l 
  \Longleftrightarrow 
  \E[\Delta] > l - \frac{\sqrt{d}}{4 \pi^2} l^3.
\end{displaymath}
If we choose $l= 1/( d^{c} \tau)$, where $c \geq 1$ is a (small) constant,
we get 
$\E[\Delta(\zeta)] > \frac{1}{d^{c} \tau} - \frac{1}{4\pi^2 d^{3c-1/2} \tau^3}$
and 
$\E[\Delta(\alpha)] > \frac{\pi}{d^{c} \tau} - \frac{1}{4\pi d^{3c-1/2} \tau^3}$.

\begin{lemma} \label{lem:real-avg-sep-weyl}
  Let $A \in \ZZ[x]$ of degree $d$, the coefficients of which are
  i.i.d.\ variables that follow a normal distribution with variances
  $1/i!$, then for the expected value of the separation bound of the
  real roots it holds 
  $\E[\Delta] > \frac{\pi}{d^{c} \tau} - \frac{1}{4\pi d^{3c-1/2} \tau^3}$
  and $\E[s] = \E[\bitsize{\Delta}] = \OO(\lg{d} + \lg{\tau})$.
\end{lemma}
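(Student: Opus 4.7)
The plan is to mirror the argument just given for $SO(2)$ polynomials, but with two simplifications. First, since we have restricted attention to real zeros inside the disc of radius $\sqrt{d}$ (the remaining ones being only a constant in number and thus contributing a negligible term to the separation analysis), the asymptotic density is the flat first branch of (\ref{eq:weyl-asympt-rho}), and the straightening map $\mathcal{P}(t)=t/\pi$ is \emph{linear}. This means that $\Delta(\zeta)=\Delta(\alpha)/\pi$ exactly, so no nonlinear correction (as in the $\arctan$ case) will be needed when passing back to the roots $\alpha_i$.

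Second, I would use the already-derived expression
\[
\Pr[\Delta(\zeta)\le l]\;\longrightarrow\;\frac{\sqrt{d}}{4\pi^2}\,l^2
\]
obtained by integrating the 2-point correlation $w(s_1,s_2)\sim |s_1-s_2|/(4\pi)$ over a zone of width $l$ around the diagonal of the square whose side length is the straightened support $\sqrt{d}/\pi$; the ordering of consecutive zeros again guarantees that the minimum gap is controlled by this diagonal zone. Markov's inequality in the form $\Pr[\Delta(\zeta)\ge l]\le \E[\Delta(\zeta)]/l$ then yields
\[
\E[\Delta(\zeta)]\;>\;l-\frac{\sqrt{d}}{4\pi^2}\,l^3,
\]
valid under the hypothesis $l\to 0$ that justifies the use of the asymptotic correlation.

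Setting $l=1/(d^c\tau)$ with $c\ge 1$ (so the hypothesis $l\to 0$ holds as $d\to\infty$) gives the claimed bound on $\E[\Delta(\zeta)]$, and multiplying by $\pi$ (from $\Delta(\alpha)=\pi\,\Delta(\zeta)$) yields
\[
\E[\Delta]\;>\;\frac{\pi}{d^{c}\tau}-\frac{1}{4\pi\, d^{3c-1/2}\tau^3}.
\]
For the bitsize bound, the leading term is $\Theta(1/(d^c\tau))$, and $c$ can be taken as a small absolute constant, so $-\lg\E[\Delta]=\OO(\lg d+\lg\tau)$; by Jensen's inequality applied to the concave function $-\lg(\cdot)$ (or simply by the fact that $\E[\bitsize{\Delta}]\le \bitsize{1/\E[\Delta]}+\OO(1)$ in the relevant regime), we obtain $\E[s]=\OO(\lg d+\lg\tau)$.

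The only slightly delicate step is justifying that the correlation-function integral really captures the probability that the \emph{minimum} of the gaps is small, rather than just the probability that \emph{some} pair of roots is close; this follows exactly as in the $SO(2)$ case from the fact that the straightened zeros are ordered and uniformly distributed on an interval of length $\sqrt{d}/\pi$, so two zeros lying in an infinitesimal interval of length $l$ must be consecutive. The rest is a routine specialization of the Markov-inequality argument to the Weyl correlation kernel $w$.
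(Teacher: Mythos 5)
Your proposal is correct and follows essentially the same route as the paper: restrict to the real zeros in the disc of radius $\sqrt{d}$ where the density is flat, use the linear straightening $\zeta_i=\alpha_i/\pi$, integrate the correlation kernel $w(s_1,s_2)=|s_1-s_2|/(4\pi)$ over the diagonal zone of width $l$ in the square of side $\sqrt{d}/\pi$ to get $\Pr[\Delta(\zeta)\le l]\to l^2\sqrt{d}/(4\pi^2)$, apply Markov's inequality, choose $l=1/(d^c\tau)$, and rescale by $\pi$. One minor slip: $-\lg(\cdot)$ is convex, not concave, so your Jensen appeal points the wrong way, but the passage from the lower bound on $\E[\Delta]$ to $\E[s]=\OO(\lg d+\lg\tau)$ is asserted equally tersely in the paper itself.
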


\subsection{The {\sc sturm} solver}
\label{sec:sturm}

Probably the first certified subdivision-based algorithm 
is the algorithm by Sturm, circa 1835, based on his theorem:
In order to count the number of real roots of a polynomial in an
interval, one evaluates a negative polynomial remainder sequence 
of the polynomial and its derivative over the left endpoint of the interval and counts
the number of sign variations. We do the same for the right endpoint;
the difference of sign variations is the number of real roots.

We assume that the positive real roots are contained in
$[0, \rat{ B}]$ (Sec.~\ref{sec:bound}).
If there are $r$ of them, then we need to compute $r-1$ separating points.
The magnitude of the separation points is at most
$\frac{1}{2}\Delta_j$, for $1 \leq j \leq r$, 
and to compute each we need $\Ceil{ \lg{ \frac{2\, \rat{ B}}{\Delta_j}}}$
subdivisions, performing binary search in the initial interval.
Let $T$ be the binary tree that corresponds to the execution of the algorithm
and $\#( T)$ be the number of its nodes, 
or in other words the total number of subdivisions:
\begin{equation}
  \#(T) =  \sum_{j=1}^{r}{ \Ceil{ \lg{ \frac{ 2 \rat{ B}}{\Delta_j}}}} 
  \le 2 r + r \lg{ \rat{ B}} - \sum_{j=1}^{r}{ \lg{\Delta_j}} .
  \label{eq:nb_of_steps}
\end{equation}
Using Lem.~\ref{lem:bound}, we deduce that
$\#(T) = \OO( rs + r \lg(d))$.

The Sturm sequence should be evaluated over a rational number, the
bitsize of which is at most the bitsize of the separation bound. 
Using fast algorithms \cite{LickteigRoy:FastCauchy:01,Reischert:subresultant:97} 
this cost is $\sOB( d^2( \tau + s))$;
to derive the overall complexity we should multiply it by $\#(T)$.
Notice that for the evaluation we use the sequence of the
quotients, which we computed in $\sOB(d^2 \tau)$
\cite{LickteigRoy:FastCauchy:01,Reischert:subresultant:97}, 
and not the whole Sturm sequence, which can be computed in $\sOB(d^3 \tau)$,
e.g.~\cite{Dav:TR:85}. 

The previous discussion allows us to express the bit complexity of \func{sturm}
not only as a function of the degree and the bitsize, but also using the number
of real roots and the (logarithm of) separation bound. This complexity is
output sensitive, and is of independent interest, although it leads to a
loose worst-case bound.

\begin{lemma}
  \label{lem:sturm-output-sens}
  Let $A \in \ZZ[x]$, $\dg{ A} = d$, $\bitsize{ A} = \tau$ 
  and let $s$ be the bitsize of its separation  bound.
  Using \func{sturm}, we isolate the real roots of $A$ with worst-case
  complexity $\sOB( r d^2 (s^2 + \tau  s))$, where $r$ is the number of real roots.
\end{lemma}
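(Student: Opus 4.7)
The plan is to put together three ingredients that are already in place: the bound on the number of subdivisions in equation (\ref{eq:nb_of_steps}), the root bound of Lemma~\ref{lem:bound}, and the fast evaluation of a Sturm/quotient sequence at a rational point using the algorithms of \cite{LickteigRoy:FastCauchy:01,Reischert:subresultant:97}.

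First I would bound the size of the binary subdivision tree $T$. From (\ref{eq:nb_of_steps}) we have $\#(T)\le 2r+r\lg{\rat B}-\sum_j\lg\Delta_j$. Lemma~\ref{lem:bound} gives $\lg{\rat B}=\OO(s+\lg d)$, and for every $j$ one has $-\lg\Delta_j\le s$ by the definition of the separation bound. Hence $\#(T)=\OO(rs+r\lg d)=\widetilde{\OO}(rs)$, since logarithmic factors in $d$ are absorbed by the $\widetilde{\OO}$ notation. This controls the depth of the recursion, which is at most $\OO(s+\lg d)$, so every tested endpoint is a rational of bitsize $\OO(\tau+s)$.

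Next I would bound the cost of a single node of $T$. At each node we must compute the number of sign variations of the quotient sequence at two rational endpoints of bitsize $\OO(\tau+s)$. The quotient sequence itself is precomputed once, in time $\widetilde{\OO}_B(d^2\tau)$, using the fast Euclidean/subresultant algorithms of \cite{LickteigRoy:FastCauchy:01,Reischert:subresultant:97}; this dominates the initialization and in particular subsumes the square-free factorization preprocessing. Evaluating the whole quotient sequence at a rational of bitsize $\OO(\tau+s)$ then costs $\widetilde{\OO}_B(d^2(\tau+s))$, again by the same references. Multiplying by the bound on $\#(T)$ gives a total bit complexity
\begin{displaymath}
  \widetilde{\OO}_B\bigl(rs\cdot d^2(\tau+s)\bigr)=\widetilde{\OO}_B\bigl(rd^2(s^2+\tau s)\bigr),
\end{displaymath}
which is exactly the claimed bound; the one-time preprocessing $\widetilde{\OO}_B(d^2\tau)$ is clearly dominated.

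The main obstacle, in my view, is not the bookkeeping but making sure that the per-node evaluation cost is genuinely $\widetilde{\OO}_B(d^2(\tau+s))$ rather than, say, $\widetilde{\OO}_B(d^3(\tau+s))$: one must evaluate via the already-computed sequence of quotients (each of small degree, summing to degree $d$) rather than the full Sturm sequence, whose coefficients have bitsize $\Omega(d\tau)$ and would inflate the bound. A secondary point to check is the handling of endpoints whose bitsize is less than the depth of $T$: even after $h$ subdivisions the midpoint has bitsize $\OO(\tau+h)\subseteq\OO(\tau+s)$, so the same per-node cost applies uniformly throughout the tree.
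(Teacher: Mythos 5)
Your proposal is correct and follows essentially the same route as the paper: bound $\#(T)$ via Eq.~(\ref{eq:nb_of_steps}) and Lemma~\ref{lem:bound} to get $\OO(rs+r\lg d)$ subdivisions, then multiply by the per-node cost $\sOB(d^2(\tau+s))$ of evaluating the precomputed quotient sequence (not the full Sturm sequence) at rational endpoints of small bitsize, yielding $\sOB(rd^2(s^2+\tau s))$. The point you flag as the main obstacle—evaluating via the quotients computed in $\sOB(d^2\tau)$ rather than the $\sOB(d^3\tau)$ Sturm sequence—is exactly the remark the paper itself makes.
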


In the worst case $s = \OO(d \tau)$, and to derive the worst case
complexity bound for \func{sturm}, $\sOB( d^4 \tau^2)$,
we should also take into account that $d s = \OO( d \tau)$.

To derive the expected complexity we should consider two cases for the
separation bound, that is, smaller or bigger than $l = 1/ (d^c \tau)$,
where $c \geq 1$ is a small constant that shall be specified later.

In the first case, that is  $\Delta \leq l = 1 / (d^c \tau)$,
the real roots are not well separated, so we rely on the worst
case bound for isolating them, that is $\sOB(d^4 \tau^2)$. 
This occurs with probability 
$\Pr[\Delta \leq l] = \Theta( \sqrt{d} \, l^2) = \Theta(\frac{1}{d^{2c-1/2} \,\tau^2})$,
by the computations of Sec.~\ref{sec:so2-polys} and Sec.\ref{sec:weyl-polys}.
This probability is very small.

For the second case, since  $\Delta > 1 / (d^c \tau)$ we deduce 
$s = \OO(lg{d} + \lg{\tau})$.
The complexity of isolating the real roots, following
Lem.~\ref{lem:sturm-output-sens} is $\sOB( r d^2 \tau)$.
The computations in Sec.~\ref{sec:so2-polys} and
Sec.\ref{sec:weyl-polys} suggest that this case occurs with probability
$\Pr[\Delta > l] = 1 - \Theta(\sqrt{d} \, l^2) =
1 - \Theta(\frac{1}{d^{2c-1/2} \,\tau^2})$, which is close to one.

The expected-case complexity bound of \func{sturm} is 
\begin{displaymath}
  \sOB\Paren{
    (1- \frac{1}{d^{2c-1/2} \,\tau^2}) \cdot rd^2 \tau + 
    \frac{1}{d^{2c-1/2} \,\tau^2} \cdot d^4 \tau^2
} = \sOB( rd^2 \tau),
\end{displaymath}
for any $c\ge 1$, by using $\sqrt{d} =\sO(r\tau)$, which
follows from the expected number of real roots.
To avoid using this expected number, it suffices to set $c\ge 2$.

\begin{theorem}
  \label{th:sturm-avg}
  Let $A \in \ZZ[x]$, where  $\dg{ A} = d$, $\bitsize{ A} = \tau$.
  If $A$ is either a $SO(2)$ or a Weyl random polynomial, then 
  the expected complexity of \func{sturm} solver is
  $\sOB( r \, d^2 \tau)$.
\end{theorem}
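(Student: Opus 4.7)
The plan is to prove the expected complexity bound by a two-case decomposition of the sample space based on the magnitude of the separation bound $\Delta$, following the generic ``generic vs.\ rare'' strategy outlined in Sec.~\ref{sec:average-complexity}. Set the threshold $l = 1/(d^c \tau)$ for a suitable constant $c\geq 1$ (or $c\geq 2$ if we want to avoid invoking the expected number of real roots). Then write
\[
  \E[\text{cost}] = \Pr[\Delta \leq l]\cdot \E[\text{cost}\mid \Delta\leq l] + \Pr[\Delta > l]\cdot \E[\text{cost}\mid \Delta > l].
\]

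The first step is to bound the rare case $\Delta\leq l$. Here I have no useful information on $\Delta$ beyond what the worst case provides, so I bound the cost by the classical worst-case complexity $\sOB(d^4\tau^2)$ of \func{sturm}. The crucial input from Sec.~\ref{sec:so2-polys} and Sec.~\ref{sec:weyl-polys} is that $\Pr[\Delta(\zeta)\leq l] = \Theta(\sqrt{d}\, l^2)$ for both $SO(2)$ and Weyl polynomials, which transfers (up to constants) to the real roots $\alpha_i$ by the monotone change of variables $\mathcal{P}$. Thus the contribution from the rare regime is $\sOB(\sqrt{d}\, l^2 \cdot d^4 \tau^2) = \sOB(d^{9/2 - 2c})$, which is tiny.

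The second step handles the generic regime $\Delta > l$. Here $s = \bitsize{\Delta} = \OO(\log d + \log\tau)$, so Lem.~\ref{lem:sturm-output-sens}, whose cost is $\sOB(r d^2 (s^2 + \tau s))$, collapses to $\sOB(r d^2 \tau)$ after absorbing the polylogarithmic factors into the $\sOB$ notation. This case occurs with probability $1 - \Theta(\sqrt{d}\, l^2)$, which is essentially $1$, so the contribution is $\sOB(r d^2 \tau)$.

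Combining the two contributions, the rare term is dominated by the generic term provided $\sqrt{d}\cdot l^2\cdot d^4\tau^2 = \sO(r d^2 \tau)$, i.e.\ $\sqrt{d}\cdot d^{4-2c}\tau = \sO(r)$. Using the expected number of real roots ($r = \Theta(\sqrt{d})$ for $SO(2)$, $r = \Theta(\sqrt{d})$ for Weyl), this is satisfied for any $c\geq 1$; if we prefer not to plug in the expected value of $r$, then $c\geq 2$ makes the rare-case contribution uniformly smaller. The main subtlety—the step I expect to require most care—is justifying that the probability statements on the \emph{minimum} gap $\Delta(\zeta)$ derived via the 2-point correlation function transfer correctly to the real roots $\alpha_i$ (so that we can apply the output-sensitive bound in terms of $\bitsize{\Delta}$), and checking that the separate treatment of the $\alpha_i,\alpha_{i+1}$ pair straddling zero does not spoil the overall bound; both points have essentially been addressed in the preceding two subsections, so the present proof amounts to assembling these ingredients into the decomposition above.
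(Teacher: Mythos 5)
Your proposal is correct and follows essentially the same route as the paper: the same threshold $l = 1/(d^c\tau)$, the worst-case bound $\sOB(d^4\tau^2)$ weighted by $\Pr[\Delta\leq l]=\Theta(\sqrt{d}\,l^2)$ in the rare case, Lemma~\ref{lem:sturm-output-sens} with $s=\OO(\lg d+\lg\tau)$ in the generic case, and the same observation that $c\geq 1$ suffices via $\sqrt{d}=\sO(r\tau)$ (or $c\geq 2$ unconditionally). Note only a small algebra slip in your restated domination condition, which should read $d^{5/2-2c}=\sO(r\tau)$ rather than $\sqrt{d}\,d^{4-2c}\tau=\sO(r)$; your earlier correct computation of the rare-case contribution $\sOB(d^{9/2-2c})$ and your conclusion are unaffected.
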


In practice, the Sturm sequence is used and not the quotient sequence.
The cost of the former is $\sOB(d^3 \tau)$ which dominates the bound of
Th.~\ref{th:sturm-avg}. This explains the empirical observations that
most of the execution time of \func{sturm} solver is spend on the
construction of the Sturm sequence. 

\section{Random Bernstein polynomials}
\label{sec:random-bernstein}

We compute the expected number of real roots
of polynomials with random coefficients, represented in the
Bernstein basis. We start with some lemmata.

\begin{lemma} 
  \label{lem:binomial-sum}
  For $k \leq n$, non-negative integers, it holds 
  \begin{displaymath}
    \Sum_{j=0}^{n}{ {k \, n \choose k \, j} x^{k \, j}} =
    \frac{1}{k} \, \sum_{j=0}^{k - 1}{ (x + {e}^{\ii \frac{2 \, \pi \, j}{k}})^{k \, n}}.
  \end{displaymath}
\end{lemma}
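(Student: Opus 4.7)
The plan is to apply the standard roots-of-unity filter to the binomial expansion of $(1+x)^{kn}$. Set $\omega = e^{\ii\, 2\pi/k}$, a primitive $k$-th root of unity; the goal is to extract, from the full binomial sum $\sum_{m=0}^{kn}\binom{kn}{m}x^m$, only those terms whose exponent is a multiple of $k$.

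First I would expand each summand on the right-hand side by the binomial theorem:
\begin{displaymath}
(x + \omega^j)^{kn} \;=\; \sum_{m=0}^{kn} \binom{kn}{m} x^m\, \omega^{j(kn-m)}.
\end{displaymath}
Because $\omega^{jkn} = (\omega^k)^{jn} = 1$, the factor $\omega^{j(kn-m)}$ collapses to $\omega^{-jm}$. Next I would interchange the finite sums and apply the orthogonality relation
\begin{displaymath}
\frac{1}{k}\sum_{j=0}^{k-1} \omega^{-jm} \;=\; \begin{cases} 1 & \text{if } k \mid m,\\ 0 & \text{otherwise,}\end{cases}
\end{displaymath}
so that the only surviving terms are those with $m = k j$ for $0 \le j \le n$. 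This yields exactly $\sum_{j=0}^{n}\binom{kn}{kj}x^{kj}$, matching the left-hand side of the claimed identity.

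There is essentially no obstacle: this is a textbook roots-of-unity filter argument, and the only small point to verify is the simplification $\omega^{jkn}=1$, which is immediate since $\omega^k = 1$. The proof is a few lines once the filter is invoked.
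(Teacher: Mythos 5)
Your proof is correct and follows essentially the same route as the paper: both expand each $(x+e^{\ii 2\pi j/k})^{kn}$ by the binomial theorem and use the vanishing of $\sum_{j=0}^{k-1}\omega^{j\nu}$ for exponents not divisible by $k$ (the roots-of-unity filter), leaving exactly the terms $\binom{kn}{kj}x^{kj}$. Your phrasing via the explicit orthogonality relation is in fact slightly cleaner than the paper's case analysis, but the underlying argument is identical.
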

\begin{proof}
  We consider the RHS of the equality.
  For a specific $j$ we expand the summand, and get terms of the form  
  \begin{displaymath}
    {k \, n \choose \mu} \, x^{kn - \mu} \, e^{\ii \frac{2 \, \pi \, j}{k} \mu},
\; 0 \leq \mu \leq kn. 
  \end{displaymath}
  There are $kn+1$ such terms.
  Recall that $e^{\ii \, 2 \pi} = 1$.
  Let $\mu = \lambda k + \nu$, where $1 \leq \nu \leq k-1$, $0\le\lambda <n$, then
  \begin{displaymath}
    {n k \choose \lambda k + \nu} \, x^{kn -  \lambda k - \nu} \, 
      e^{\ii \frac{2 \, \pi \, j}{k}(\lambda k + \nu)} = 
      {n k \choose \lambda k + \nu} \, x^{kn -  \lambda k - \nu} \, 
      e^{\ii \frac{2 \, \pi \, j}{k}\lambda k} \, 
      e^{\ii \frac{2 \, \pi \, j}{k}\nu}  = 
      {n k \choose \lambda k + \nu} \, x^{kn -  \lambda k - \nu} \, 
      e^{\ii \frac{2 \, \pi \, j}{k}\nu}.
    \end{displaymath}
  If we sum all these terms over $j$, we get
  \begin{displaymath}
    \sum_{j=0}^{k-1}{ {n k \choose \lambda k + \nu} \, x^{kn -  \lambda k - \nu} \,  e^{\ii \frac{2 \, \pi \, j}{k}\nu} } = 
    {n k \choose \lambda k + \nu} \, x^{kn -  \lambda k - \nu} \,
    \sum_{j=0}^{k-1}{ e^{\ii \frac{2 \, \pi \, j}{k}\nu} } =  0,
  \end{displaymath}
  since $\sum_{j=0}^{k-1}{ e^{\ii \frac{2 \, \pi \, j}{k}} } = 0$.

  Let $\mu = \lambda k$. In this case, we have 
  \begin{displaymath}
    {n k \choose \lambda k} \, x^{kn - \lambda k} \, e^{\ii \frac{2 \, \pi \, j}{k}
      \lambda k} = 
    {n k \choose \lambda k} \, x^{kn - \lambda k} = 
    {n k \choose k(n - \lambda)} \, x^{k (n - \lambda)}
  \end{displaymath}
  Notice that $0 \leq \lambda \leq n$.
  Summing up over all $\lambda$ and all $j$, and
  multiplying by $1/k$ we get the LHS.
\end{proof}

\begin{lemma}
  \label{lem:binomial-power}
  For non-negative integers $n, k, p$,
  \begin{displaymath}
    \frac{{n \choose k}^p}{{pn \choose pk}} 
    \approx \sqrt{ p \Paren{ \frac{n}{2 \pi}}^{p-1}} \, \sqrt{ \Paren{\frac{1}{k(n-k)}}^{p-1}}.
  \end{displaymath}
\end{lemma}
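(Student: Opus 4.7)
The plan is to apply Stirling's formula to every factorial appearing in the two binomial coefficients and then rely on an exact algebraic cancellation of the exponential factors in the ratio. Recall $m! \sim \sqrt{2\pi m}\,(m/e)^{m}$, which gives
$$
{n \choose k} \sim \frac{1}{\sqrt{2\pi}}\sqrt{\frac{n}{k(n-k)}}\;\frac{n^{n}}{k^{k}(n-k)^{n-k}}.
$$
Raising to the $p$-th power yields a factor $n^{pn}/(k^{pk}(n-k)^{p(n-k)})$ together with the polynomial term $(2\pi)^{-p/2}\,(n/(k(n-k)))^{p/2}$.

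The key step, and the reason the statement is clean, is what happens when one Stirling-approximates the denominator ${pn \choose pk}$. Writing
$$
{pn \choose pk} \sim \frac{1}{\sqrt{2\pi}}\sqrt{\frac{pn}{pk\cdot p(n-k)}}\;\frac{(pn)^{pn}}{(pk)^{pk}\,(p(n-k))^{p(n-k)}},
$$
I would observe that the $p$-powers collapse: $p^{pn}$ in the numerator cancels exactly against $p^{pk}\cdot p^{p(n-k)}=p^{pn}$ in the denominator, so the exponential piece becomes $n^{pn}/(k^{pk}(n-k)^{p(n-k)})$, i.e.\ identical to the one arising from $\binom{n}{k}^{p}$. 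This is the whole point of the lemma, and the rest is just bookkeeping: the polynomial prefactor for ${pn \choose pk}$ reduces to $\frac{1}{\sqrt{2\pi p}}\sqrt{\tfrac{n}{k(n-k)}}$.

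Dividing $\binom{n}{k}^{p}$ by $\binom{pn}{pk}$ and cancelling the exponential block, I would be left with
$$
\frac{\binom{n}{k}^{p}}{\binom{pn}{pk}} \sim \sqrt{2\pi p}\,(2\pi)^{-p/2}\,\Bigl(\tfrac{n}{k(n-k)}\Bigr)^{(p-1)/2}
= \sqrt{p}\,\Bigl(\tfrac{n}{2\pi}\Bigr)^{(p-1)/2}\Bigl(\tfrac{1}{k(n-k)}\Bigr)^{(p-1)/2},
$$
which is exactly the claimed expression once re-written as the product of two square roots. The only genuinely non-routine part of the proof is the cancellation of the $p^{pn}$ factors noted above; everything else is mechanical Stirling bookkeeping. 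I would also add a brief remark that the approximation requires $k$ and $n-k$ to grow with $n$ (the Stirling error terms are $O(1/k)+O(1/(n-k))$), which is harmless for the later application where $0<k<n$ with both $k$ and $n-k$ of order $n$.
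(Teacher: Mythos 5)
Your proposal is correct and follows exactly the route the paper intends: its proof is a one-line appeal to Stirling's approximation $n!\approx\sqrt{2\pi n}\,(n/e)^{n}$, and your computation (including the cancellation of the $p^{pn}$ factors and the resulting prefactor $\sqrt{p}\,(n/(2\pi))^{(p-1)/2}(k(n-k))^{-(p-1)/2}$) is precisely the bookkeeping the paper leaves to the reader. Your added caveat about the validity range of the approximation (both $k$ and $n-k$ large) is a reasonable refinement but does not change the argument.
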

\begin{proof}
  The proof follows easily from Stirling's approximation
  $n! \approx \sqrt{2 \pi n}\,(\frac{n}{e})^n$. 
\end{proof}

More accurate results could be obtained if the more precise expression
$\sqrt{2 \pi n}\,(\frac{n}{e})^n \, e^{\frac{1}{12n+1}}< 
n! < \sqrt{2 \pi n}\,(\frac{n}{e})^n \, e^{\frac{1}{12n}}$,
is considered.

\subsection{The expected number of real roots}

We aim to count the real positive roots of a random polynomial in the Bernstein
basis of degree $d$, i.e.
\begin{equation}
  \label{eq:bernstein-poly}
  \widehat{P} := \sum_{k=0}^{k=d} b_k {d \choose k} z^{k}(1-z)^{d-k},
\end{equation}
where we assume that $\widehat{P}(0) \widehat{P}(1) \not= 0$,
and $\set{b_k}$ is an array of random real numbers,
following the normal distribution, with ``moderate'' standard deviation, 
which shall be specified below.

\begin{figure}[t] \centering
  \includegraphics[scale=0.45]{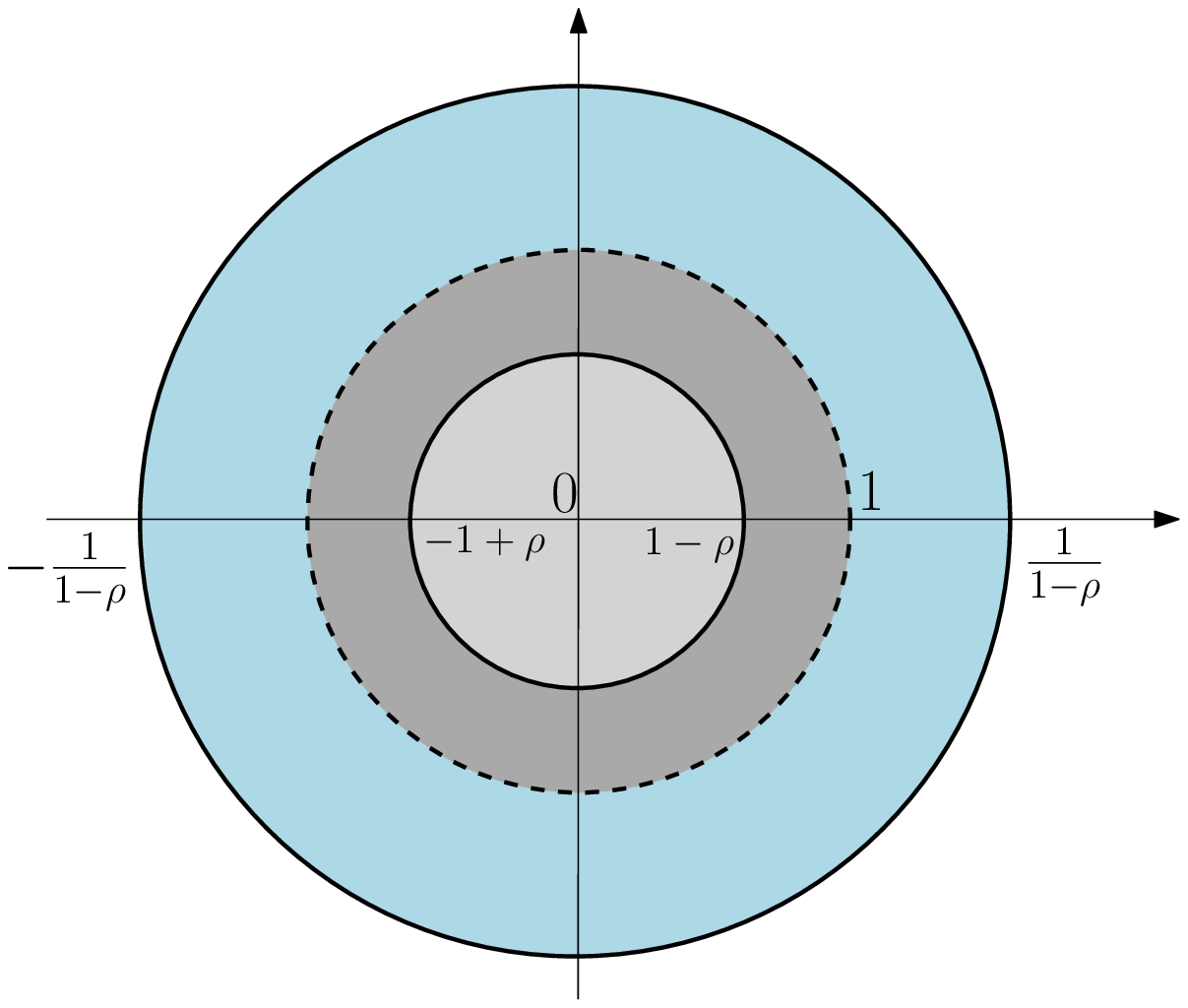} 
  \hspace{30pt}
  \includegraphics[scale=0.45]{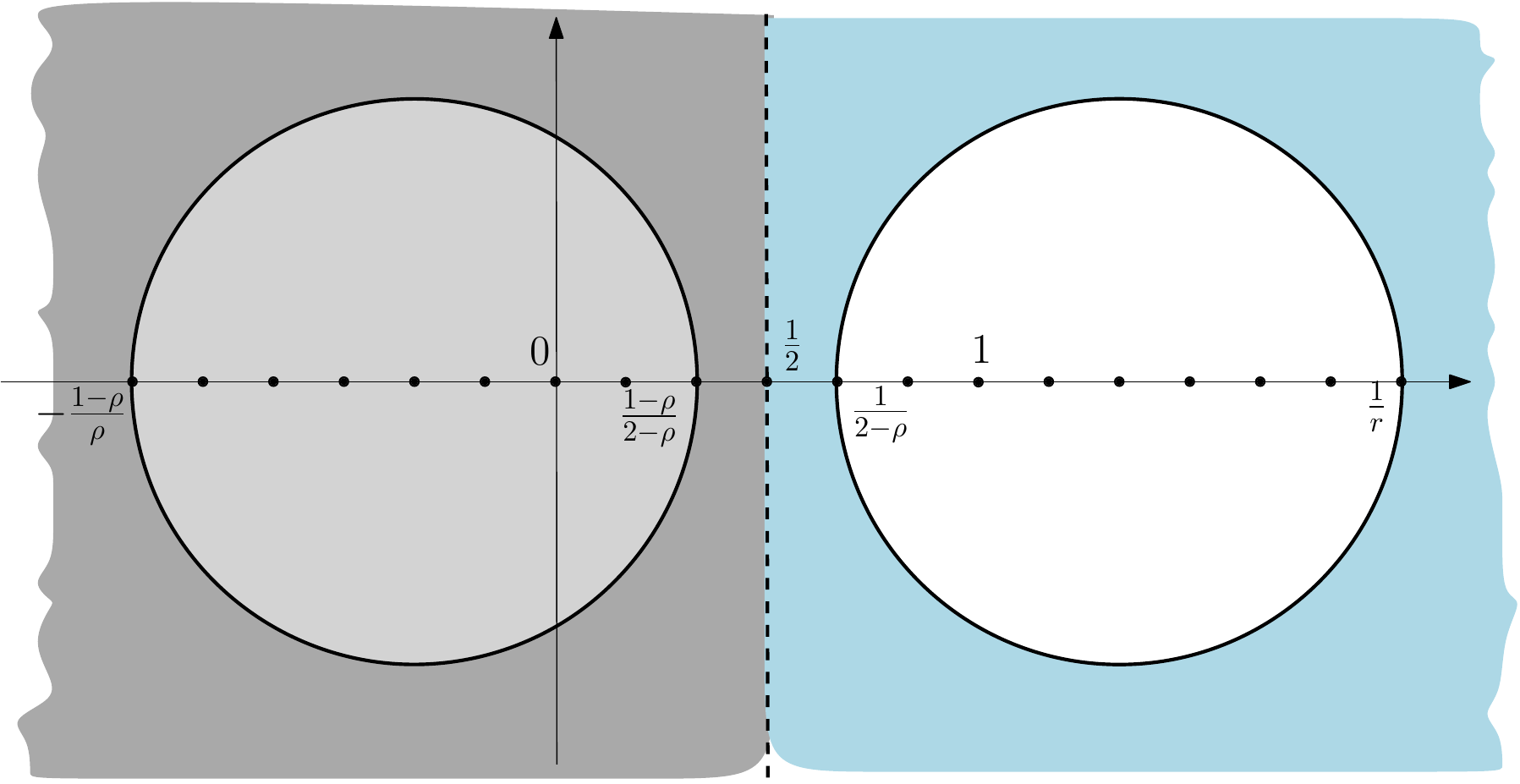}
  \caption{The transformation $z:= \frac{y}{y+1}$ in $\mathbb{C}$.}
  \label{fig:transformation}
\end{figure}

We introduce a suitable change of coordinates, $z := y/(y+1)$,
to transform a polynomial in the Bernstein basis into one in the monomial basis,
by setting $P = (1+y)^d \widehat{P}(y)$.
Now, $P$ and $\widehat{P}$ have the same number of real roots, and 
\begin{displaymath}
  P = \sum_{k=0}^{k=d} b_k {d \choose k} y^k.
\end{displaymath}
Even though the number of real roots does not change, their distribution over
the real axis does, see Fig.~\ref{fig:transformation}.
In particular, we can now apply the techniques already used by Edelman, Kostlan,
and others for counting the number (and, eventually, the limit distribution)
of real roots.
Of course, by symmetry, the expected number of positive and negative
real roots is equal.

By Lem.~\ref{lem:binomial-power}, setting $p=2$ and $n=d$ we deduce:
\begin{equation}
  {d \choose k} \approx \sqrt{ \sqrt{\frac{d}{\pi}} \, \sqrt{\frac{1}{k(d-k)}}} 
  \sqrt{ {2d \choose 2 k}}
  =: \sqrt{S_k}  \,  \sqrt{ {2d \choose 2 k}} .
  \label{eq:binomial-analogy}
\end{equation}
It holds that 
$ \sqrt{2/ {\sqrt{\pi d}}} \leq {{d \choose k}}/{\sqrt{{2d \choose 2k}}} = \sqrt{S_k} \leq 1$. 
To prove this, notice that $S_k$ is decreasing from~1 to $d/2$
and increasing from $d/2$ to $d-1$.
Hence the lower bound is attained at $k=d/2$ and the upper bound at $k=1$ and $k=d-1$.

Since $S_k$ is small compared to ${d \choose k}$, 
it is reasonable to assume that omitting it will
make only a negligible change in the asymptotic analysis. 
 
Let $y = x^2$, with $x > 0$.
Now the problem at hand to count the positive real roots of
\begin{displaymath}
  P = \sum_{k=0}^{k=d}{ a_k \sqrt{ {2d \choose 2k}} x^{2k}}.
\end{displaymath}
We need the following proposition
\begin{proposition} \label{prop:EK} \cite{EdeKos-BAMS-95}
  Let $v(t) = (f_o(t), \dots, f_n(t))^{\top}$ be a vector of differentiable
  functions and $c_0, \dots, c_n$ elements of a multivariate normal distribution with
  zero mean and covariance matrix $C$. The expected number of real zeros on an interval
  (or a measurable set) $I$ of the equation
  $c_0 f_0(t) + \cdots + c_n f_n(t) = 0$,
  is
  \begin{displaymath}
\int_I{ \frac{1}{\pi} \norm{ \bvec{w}'(t)} dt},\; \bvec{w} = w(t)/ \norm{w(t)}.
  \end{displaymath}
  where $w(t) = C^{1/2} v(t)$.
  In logarithmic derivative notation, this is 
  \begin{displaymath}
    \frac{1}{\pi} \int_I{ \sqrt{ \frac{\partial^2}{\partial x \partial y}
    \log{ (v(x)^{\top} C v(x)}) |_{x=y=t}} dt}.
  \end{displaymath}
\end{proposition}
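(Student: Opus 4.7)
The plan is to prove this via the Kac--Rice formula cast in its geometric (Edelman--Kostlan) form, using an integral-geometry argument on the sphere.

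First I would reduce to standard normals. Setting $\tilde{c} = C^{-1/2} c$, the entries of $\tilde{c}$ are i.i.d.\ $N(0,1)$ and $c^\top v(t) = \tilde{c}^\top w(t)$, where $w(t) = C^{1/2} v(t)$. Counting real zeros of $\sum c_i f_i$ on $I$ is therefore counting the $t \in I$ for which $\tilde{c} \perp w(t)$, equivalently $\tilde{c} \perp \bar{w}(t)$, where $\bar{w}(t) := w(t)/\norm{w(t)}$ is a smooth curve on the unit sphere $S^n \subset \RR^{n+1}$.

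Next I would exploit rotational invariance. Since the law of $\tilde{c}$ is $O(n+1)$-invariant, $\tilde{c}/\norm{\tilde{c}}$ is uniformly distributed on $S^n$ and the zero-count on $I$ equals the number of intersections of the curve $\bar{w}$ with the random great hypersphere $H_{\tilde{c}} = \{x \in S^n : \tilde{c}^\top x = 0\}$. I would then apply the sphere version of Crofton's formula: for a rectifiable curve $\gamma \subset S^n$ of arc length $L$, the expected number of intersection points with a uniformly random great hypersphere is $L/\pi$. This immediately yields
\[
\E[\#\{t \in I : c^\top v(t) = 0\}] \;=\; \frac{1}{\pi}\int_I \norm{\bar{w}'(t)}\, dt,
\]
which is the first displayed formula.

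For the logarithmic-derivative form I would compute directly: from $\bar{w} = w/(w^\top w)^{1/2}$,
\[
\norm{\bar{w}'(t)}^2 \;=\; \frac{(w^\top w)(w'^\top w') - (w^\top w')^2}{(w^\top w)^2},
\]
and this expression is precisely $\frac{\partial^2}{\partial x \partial y}\log\!\bigl(w(x)^\top w(y)\bigr)\big|_{x=y=t}$, as seen by expanding the logarithm and using symmetry of the bilinear form in $x,y$. Since $w(x)^\top w(y) = v(x)^\top C\, v(y)$, the second form of the statement follows.

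The main obstacle is pinning down the constant in the Crofton-type step: the factor $1/\pi$ (rather than $1/(2\pi)$) reflects the fact that $\tilde{c}$ and $-\tilde{c}$ define the same hyperplane, so one must identify antipodal points on $S^n$; equivalently, one checks the normalization by reducing to the one-dimensional case where $\gamma$ is a great circle. As a sanity check I would verify the same constant via a direct Kac--Rice computation for the Gaussian process $g(t) = \tilde{c}^\top w(t)$, using the density $p_{g(t)}(0) = 1/(\sqrt{2\pi}\,\norm{w(t)})$ together with $\E[|g'(t)| \mid g(t)=0]$ obtained from the $2\times 2$ Gaussian joint law of $(g(t), g'(t))$; both routes produce the same integrand, confirming the formula.
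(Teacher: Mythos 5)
Your argument is correct and is essentially the original Edelman--Kostlan derivation: the paper itself states Proposition~\ref{prop:EK} without proof, citing \cite{EdeKos-BAMS-95}, and your reduction to i.i.d.\ standard normals, the spherical Crofton step with the $1/\pi$ normalization (checked on a great circle), and the identity $\norm{\bvec{w}'(t)}^2 = \partial_x\partial_y \log\bigl(w(x)^{\top}w(y)\bigr)\big|_{x=y=t}$ are exactly the steps of that proof. Note only that the paper's displayed logarithmic-derivative formula contains a typo ($v(x)^{\top}Cv(x)$ should read $v(x)^{\top}Cv(y)$, differentiated in $x$ and $y$ before setting $x=y=t$), and your computation correctly uses the intended two-variable form.
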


For computing the integral in Prop.~\ref{prop:EK}, we shall use
the logarithmic derivative notation.
Following Prop.~\ref{prop:EK},
$f_{2i}(t) = \sqrt{{2d \choose 2i}}x^{2i}$ and $f_{2i+1}(t) = 0$, 
$c_{2i} = a_i$ and $c_{2i+1}=0$, where $0 \leq i \leq d$,
and the variance is~1.
Then,
\begin{displaymath}
  v(x)^{\top} C v(y) = \sum_{k=0}^{d} {2d \choose 2k} (xy)^{2k} .
\end{displaymath} 
We consider function
\begin{displaymath}
  f(z) = \sum_{k=0}^{d} {2d \choose 2k} z^{2k}.
\end{displaymath}
By Lem.~\ref{lem:binomial-sum}, for $k=2$, we have
$f(z) = \frac{1}{2} \Paren{ (1 + z)^{2d} + (1 -z)^{2d}}$ 
and so
$f^{'}(z) = d( z+1)^{2d-1} + d( z-1)^{2d-1}$,
$f^{''}(z) = d(2d-1)( z+1)^{2d-2} + d(2d-1)( z-1)^{2d}$.

The following  quantities are also relevant
$f f^{'} = \frac{1}{2}d( z+1)^{4d-1} + dz(z^2-1)^{2d-1} + \frac{1}{2}( z-1)^{4d-1}$,
$f f^{''} = \frac{1}{2}d(2d-1)(z+1)^{4d-2} + d(2d-1)(z^2+1)(z^2-1)^{2d-2} +
\frac{1}{2}(2d-1)( z-1)^{4d-2}$,
and $ (f')^{2} = d^2(z+1)^{4d-2} + 2d^2(z^2-1)^{2d-1} + d^2(z-1)^{4d-2}$.

It holds that 
\begin{displaymath}
  \partial_x \partial_y( \log f( x, y)) = 
  \frac{f^{'} f + x y f^{''}f - xy (f^{'})^2}{f^2} =
  \frac{A}{f^2},
\end{displaymath}
with
\begin{displaymath}
\hspace{-0.2cm}
  \begin{array}{rl}
    A =  & d(z+1)^{4d-2}( \frac{1}{2}(z+1) + \frac{1}{2}(2d-1)z  - zd) \\
        & + d(z^2-1)^{2d-2}( z(z^2-1) + (2d-1)z(z^2+1) - 2d(z^2-1)z) \\
        & -d(z-1)^{4d-2}(\frac{1}{2}(z-1) + \frac{1}{2}(2d-1)z -zd) \\
      = & \frac{1}{2} d \left( (z+1)^{4d-2} + 4(2d-1)z(z^2-1)^{2d-2} - (z-1)^{4d-2} \right).
  \end{array}
\end{displaymath}
If we let $z = t^2$, then
\begin{displaymath}
\hspace{-0.2cm}
  \begin{array}{ll}
    \frac{A(t^2)}{f(t^2)^2} & 
    = \frac{\frac{1}{2}d ((1+t^2)^{4d-2} + 4(2d-1)t^2(t^4-1)^{2d-2} - (t^2-1)^{4d-2} ) }
    {\frac{1}{4}( (1+t^2)^{2d} + (1-t^2)^{2d})^2} \\
    & = 2d \frac{1}{(1+t^2)^2}
    \frac{1 + (2d-1) \left( \frac{2t}{1+t^2} \right)^2 \left(\frac{1-t^2}{1+t^2} \right)^{2d-2} - \left(\frac{1-t^2}{1+t^2} \right)^{4d-2}}
    {\left( 1 + \left( \frac{1-t^2}{1+t^2} \right)^{2d} \right)^2}.
  \end{array}
\end{displaymath}

We consider the substitutions
$t = \tan \frac{\theta}{2}$,
$\tan \theta = \frac{2t}{1-t^2}$,
$\sin \theta = \frac{2t}{1+t^2}$,
$\cos \theta = \frac{1-t^2}{1-t^2}$, 
and $\frac{d \theta}{2} = \frac{dt}{1+t^2}$.
Then
\begin{displaymath}
  \begin{array}{l}
  \frac{A}{f(t^2)^2} = 2d \frac{1}{(1 +t^2)^2}
  \frac{1 + (2d-1) \sin^2{ \theta} (\cos{ \theta})^{2d-2} - (\cos{ \theta})^{4d-2}}
       { \left( 1 + (\cos{ \theta})^{2d} \right)^2}.
     \end{array}
\end{displaymath}

The expected number of positive real roots is given by
\begin{displaymath}
  \begin{array}[]{lll}
  I & = & \frac{1}{\pi} \int_{0}^{\infty}{ \frac{\sqrt{A}}{f(t^2)}} d t \\
    & = &\frac{1}{\pi} \int_{0}^{\pi}{ 
    \sqrt{2d} \frac{\sqrt{1 + (2d-1) \sin^2 \theta (\cos \theta)^{2d-2} - (\cos \theta)^{4d-2} }}
                   {1 + (\cos \theta)^{2d}}
     } \frac{d \theta}{2}.
   \end{array}
\end{displaymath}

Performing the change $\theta \mapsto \pi - \theta$, we notice
that $I$ equals twice the integral between $0$ and $\pi/2$. 
Hence, the expected number of positive real roots of $P$ in $(0, 1)$ equals
that in $(1, \infty)$. Hence,
\begin{displaymath}
  \begin{array}{l}
  I = \frac{ \sqrt{2d}}{\pi} \int_{0}^{\pi/2}{ 
    \frac{ \sqrt{ 1 + (2d-1) \sin^2 \theta (\cos \theta)^{2d-2} - (\cos \theta)^{4d-2} }}
                   {1 + (\cos \theta)^{2d}}
     } d \theta.
   \end{array}
\end{displaymath}

Now we will bound the integral as $d \rightarrow \infty$.
Applying the triangular inequality and noticing that
$1 + (cos \theta)^{4d-2} \leq 1$ and $a + \cos{\theta}^{2d} \geq 1$,
we get
\begin{displaymath}
  \begin{array}{lcl}
    I & \leq & \frac{\sqrt{ 2d}}{\pi} \left[ 
      \int_0^{\pi/2}{\sqrt{1}\, d \theta} + 
      \int_0^{\pi/2}{\sqrt{2d-1} \sin{\theta} (\cos{\theta})^{d-1}\, d \theta} 
    \right] \\
    & = &  \frac{\sqrt{ 2d}}{\pi} 
    \left( \frac{\pi}{2} + \sqrt{2d-1}\, \frac{1}{d} \right) = 
    \frac{\sqrt{ 2d}}{2} 
    \left( 1 + \frac{1}{\pi} \frac{\sqrt{2d-1}}{d} \right) \\
    & \leq &  \frac{\sqrt{ 2d}}{2} \left( 1 + \frac{1}{\pi}
      \sqrt{\frac{2}{d}}\right) 
    \leq \frac{\sqrt{ 2d}}{2} + \frac{1}{\pi}.
  \end{array}
\end{displaymath}

For a lower bound, we neglect the positive term
$(2d-1) \sin^2 \theta (\cos \theta)^{2d-2}$, and notice that 
$\sqrt{ 1 + (\cos{ \theta})^{4d-2}} \geq 1 +(\cos{ \theta})^{2d-1} 
\geq  1 +(\cos{ \theta})^{2d-2} = (1 +(\cos{ \theta})^{d-1})(1 -(\cos{ \theta})^{d-1})$,
and
$\frac{1 +(\cos{ \theta})^{d-1}}{1 +(\cos{ \theta})^{2d}} \geq 1$.

\begin{lemma}
  $$
  W(n) := 
  \int_{0}^{\pi/2}{ ( \cos{ \theta})^{n} \, d \theta} 
  \leq 
  \frac{2}{\sqrt{\pi}} \frac{1}{\sqrt{d+1}}.
  $$
\end{lemma}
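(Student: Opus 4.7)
The plan is to analyse $W(n)$ through the classical Wallis recurrence and then apply Stirling's formula, which the paper has already invoked in Lemma~\ref{lem:binomial-power}.

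First, I would establish the recurrence $W(n) = \tfrac{n-1}{n}\,W(n-2)$ for $n\ge 2$ by integration by parts: writing $\cos^{n}\theta = \cos^{n-1}\theta\cdot\cos\theta$, setting $u = \cos^{n-1}\theta$, $dv = \cos\theta\,d\theta$, and using $\sin^{2}\theta = 1-\cos^{2}\theta$, the boundary terms vanish at $0$ and $\pi/2$, leaving $W(n) = (n-1)(W(n-2)-W(n))$, which rearranges to the claimed recurrence.

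Second, unrolling the recurrence from the base cases $W(0)=\pi/2$ and $W(1)=1$ yields the closed forms
\[
W(2m) = \frac{\pi}{2}\,\binom{2m}{m}\,4^{-m}, \qquad W(2m+1) = \frac{4^{m}(m!)^{2}}{(2m+1)!} = \frac{1}{(2m+1)\binom{2m}{m}4^{-m}}.
\]
For the even index this directly exposes a central binomial coefficient, to which Lemma~\ref{lem:binomial-power} applies (with $p=2$, $k=m$, $n=2m$), giving $\binom{2m}{m}\,4^{-m}$ bounded by a constant times $1/\sqrt{m}$. This already delivers an upper bound of order $1/\sqrt{n}$ with an explicit constant.

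Third, to handle odd $n$ uniformly and to get the $\sqrt{n+1}$ denominator (rather than $\sqrt{n}$), I would exploit two standard consequences of the recurrence. Monotonicity $W(n+1)\le W(n)$ (immediate from $\cos\theta\in[0,1]$ on $[0,\pi/2]$) and the Wallis product identity $W(n)\,W(n+1) = \tfrac{\pi}{2(n+1)}$ (a direct telescoping of the recurrence against the base-case product $W(0)W(1)=\pi/2$) together give $W(n)^{2}\le W(n-1)W(n) = \tfrac{\pi}{2n}$, and symmetrically $W(n)^{2}\ge \tfrac{\pi}{2(n+1)}$. Picking the appropriate inequality according to the parity of $n$, and absorbing the parity gap into the constant, converts the bound into the form $C/\sqrt{n+1}$ uniformly in $n$.

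The main obstacle will be pinning down the precise numerical constant $2/\sqrt{\pi}$ appearing in the statement. The first pass through Stirling via Lemma~\ref{lem:binomial-power} naturally produces the constant $\sqrt{\pi}/2$ rather than $2/\sqrt{\pi}$, so the sharper two-sided Stirling bounds $\sqrt{2\pi n}(n/e)^{n}e^{1/(12n+1)} < n! < \sqrt{2\pi n}(n/e)^{n}e^{1/(12n)}$ (explicitly cited after Lemma~\ref{lem:binomial-power}) have to be plugged in carefully, and the constant verified for the small values of $n$ by hand so that the asymptotic argument is not contaminated by boundary effects. In particular, proving that $n\mapsto W(n)\sqrt{n+1}$ is monotone (or at least bounded by the target constant) reduces the verification to a single base case plus a one-step estimate coming from the recurrence.
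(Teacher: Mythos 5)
Your route is genuinely different from the paper's: the paper takes the closed forms $W(2k)=\frac{\pi}{2}\frac{(2k-1)!!}{(2k)!!}$ and $W(2k+1)=\frac{(2k)!!}{(2k+1)!!}$ and plugs in the two-sided Wallis-ratio estimate cited from \cite{cq-ams-2004}, whereas you stay self-contained, deriving the recurrence $W(n)=\frac{n-1}{n}W(n-2)$, monotonicity, and the telescoped identity $W(n)W(n+1)=\frac{\pi}{2(n+1)}$, which squeeze to $\sqrt{\pi/(2(n+1))}\le W(n)\le\sqrt{\pi/(2n)}$. That is the right order $1/\sqrt{n}$, and it is all that the paper actually uses: in the lower bound for $I$ only $W(d-1)=\OO(1/\sqrt{d})$ matters, so Theorem~\ref{th:nb-roots-2d-2k} is untouched by the value of the constant.

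The genuine gap is the step you defer to the end: ``pinning down the precise numerical constant $2/\sqrt{\pi}$'' is not a technical obstacle but an impossibility, and your own estimate proves it. Since $W(n)\ge\sqrt{\pi/(2(n+1))}$, we have $W(n)\sqrt{n+1}\ge\sqrt{\pi/2}\approx 1.2533$ for every $n$, while $2/\sqrt{\pi}\approx 1.1284$; the comparison reduces to $\pi^{2}>8$. Hence the inequality as printed is false for all $n$, however one resolves the $n$ versus $d$ indexing (reading $n=d-1$, as in the paper's application, only makes the right-hand side smaller). No sharper version of Stirling, monotonicity of $W(n)\sqrt{n+1}$, or base-case checking can rescue the constant; the sharp constant is $\sqrt{\pi/2}$, attained in the limit. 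Note the paper's own proof has the same defect in its last line, where $\sqrt{\pi}/\sqrt{2n+1}$ is compared to $2/(\sqrt{\pi}\sqrt{d+1})$. So the correct conclusion of your argument is the lemma with constant $\sqrt{\pi/2}$ (or any larger constant) in place of $2/\sqrt{\pi}$ --- with that emendation your proof is complete, and arguably cleaner than the paper's since it avoids the external inequality; without it, the program you outline in your final paragraph would fail.
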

\begin{proof}
  We need the following inequality \cite{cq-ams-2004}
  on Wallis' cosine formula:
  \begin{displaymath}
    \frac{1}{\sqrt{\pi(k + 4\pi^{-1} -1)}}
    \leq
    \frac{1 \cdot 3 \cdot 5 \cdots (2k-1)}{ 2 \cdot 4 \cdot 6 \cdots (2k)}
    \leq
    \frac{1}{\sqrt{\pi(k+1/4)}}.
  \end{displaymath}
  
  If $n$ is even then
  $W(n) =
  \frac{\pi}{2} \frac{(n-1)!!}{n!} =
  \frac{\pi}{2} \frac{1\cdot3\cdot5 \cdots(n-1)}{2\cdot4\cdot6\cdots(n)}
  \leq \frac{\pi}{\sqrt{\pi(2n+1)}}
  \leq \frac{2}{\sqrt{\pi}} \frac{1}{\sqrt{d+1}}
  $.

  If $n$ is odd then 
  $W(n) = 
   \frac{(n-1)!!}{n!} =
   \frac{2\cdot4\cdot6\cdots(n-3)(n-1)}{1\cdot3\cdot5 \cdots(n-2)n}
   \leq
   \sqrt{\pi(k + 4\pi^{-1} -1)} \cdot \frac{1}{n}
   \leq \frac{2}{\sqrt{\pi}} \frac{1}{\sqrt{d+1}}
  $.
\end{proof}
Using the lemma,
$\int_{0}^{\pi/2}{ ( \cos{ \theta})^{d-1} \, d \theta} 
\leq \frac{2}{\sqrt{\pi}} \frac{1}{\sqrt{d}}$, so:
\begin{displaymath}
  \begin{array}{lll}
    I & \geq &
    \frac{\sqrt{ 2d}}{\pi} \int_0^{\pi/2}{1 - (\cos{ \theta})^{d-1} \, d \theta} \\
    & \geq & \frac{\sqrt{ 2d}}{2} \left( 1 - \frac{4}{\pi \sqrt{ \pi}} \frac{1}{\sqrt{d}} \right)
     \geq  \frac{\sqrt{ 2d}}{2} - \sqrt{ \frac{8}{\pi^3}}.
  \end{array}
\end{displaymath}
Hence $I = \frac{\sqrt{ 2d}}{2} \pm \OO( 1)$ and we can state the following:

\begin{theorem}
  \label{th:nb-roots-2d-2k}
  The expected number of real roots of a random polynomial
  $P = \sum_{k=0}^{k=d}{ a_k \sqrt{ {2d \choose 2k}} x^{2k}}$,
  where $a_k$ are standard normals i.i.d. random variables,
  is $\sqrt{ 2d} \pm \OO( 1)$.
\end{theorem}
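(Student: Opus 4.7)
My plan is to apply the Edelman--Kostlan formula (Proposition~\ref{prop:EK}) with $v(x)$ the vector whose entries are $\sqrt{\binom{2d}{2k}}\,x^{2k}$ (for $0 \le k \le d$) and $C$ the identity, so that the $a_k$ play the role of independent standard normals. The kernel is then
\[
v(x)^\top C\, v(y) \;=\; \sum_{k=0}^d \binom{2d}{2k}(xy)^{2k} \;=:\; f(xy),
\]
and the expected number of positive real roots equals $I = \tfrac{1}{\pi}\int_0^\infty \sqrt{\partial_x\partial_y \log f(xy)\big|_{x=y=t}}\,dt$. Since $P(-x)=P(x)$, the total expected number of real roots is exactly $2I$, so it suffices to show $I = \sqrt{2d}/2 \pm O(1)$.

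The first step is to close up $f$: Lemma~\ref{lem:binomial-sum} with parameters $k=2$, $n=d$ gives $f(z) = \tfrac{1}{2}\bigl((1+z)^{2d}+(1-z)^{2d}\bigr)$. From this I compute $f'$ and $f''$ and then assemble the logarithmic derivative via the identity
\[
\partial_x\partial_y \log f(xy)\big|_{x=y=t} \;=\; \frac{ff' + t^2\bigl(ff'' - (f')^2\bigr)}{f^2}\bigg|_{z=t^2}.
\]
After telescoping cross-terms between $(1+z)^{a}(1-z)^{b}$ factors, the numerator collapses to the clean expression $\tfrac{d}{2}\bigl((z+1)^{4d-2} + 4(2d-1)z(z^2-1)^{2d-2} - (z-1)^{4d-2}\bigr)$.

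Next I apply the Weierstrass substitution $t = \tan(\theta/2)$, under which $2t/(1+t^2) = \sin\theta$, $(1-t^2)/(1+t^2) = \cos\theta$, and $dt/(1+t^2) = d\theta/2$. This converts the integral over $(0,\infty)$ to one over $(0,\pi)$, and the symmetry $\theta\mapsto\pi-\theta$ halves it:
\[
I \;=\; \frac{\sqrt{2d}}{\pi}\int_0^{\pi/2}\frac{\sqrt{1 + (2d-1)\sin^2\theta\,(\cos\theta)^{2d-2} - (\cos\theta)^{4d-2}}}{1+(\cos\theta)^{2d}}\,d\theta.
\]
For the upper bound, the triangle inequality inside the square root and the trivial estimate $1+(\cos\theta)^{2d}\ge 1$ reduce the integral to $\tfrac{\pi}{2} + \sqrt{2d-1}\int_0^{\pi/2}\sin\theta(\cos\theta)^{d-1}d\theta = \tfrac{\pi}{2} + \sqrt{2d-1}/d$. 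For the lower bound I discard the positive middle term, factor $1 - (\cos\theta)^{4d-2}$ so as to extract $1-(\cos\theta)^{d-1}$, cancel the surviving factor against the denominator using $(1+(\cos\theta)^{d-1})/(1+(\cos\theta)^{2d}) \ge 1$, and am left with $\int_0^{\pi/2}(1-(\cos\theta)^{d-1})\,d\theta$. The tail $\int_0^{\pi/2}(\cos\theta)^{d-1}\,d\theta$ is then controlled by the Wallis-type bound $W(n)\le 2/\sqrt{\pi(d+1)}$, which follows from standard sharp estimates on $(2k-1)!!/(2k)!!$. The two bounds sandwich $I$ between $\sqrt{2d}/2 \pm O(1)$, as required.

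The main obstacle I anticipate is not analytic but algebraic: obtaining the compact numerator displayed above requires the $(1+z)^{2d-i}$ and $(1-z)^{2d-i}$ contributions arising in $ff'$, $ff''$, and $(f')^2$ to cancel in just the right way, leaving only the three-term combination that makes the subsequent Weierstrass substitution tractable. Once that identity is in hand, the trigonometric and Wallis estimates proceed routinely.
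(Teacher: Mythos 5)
Your proposal follows essentially the same route as the paper: the Edelman--Kostlan formula with kernel $f(z)=\sum_k\binom{2d}{2k}z^{2k}$, closed via Lemma~\ref{lem:binomial-sum} to $\tfrac12\bigl((1+z)^{2d}+(1-z)^{2d}\bigr)$, the same logarithmic-derivative numerator, the same substitution $t=\tan(\theta/2)$, and the same upper/lower bounds via the triangle inequality and the Wallis-type estimate. The argument is correct (and you make the doubling-by-evenness step explicit, which the paper leaves implicit), so there is nothing to add.
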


By employing (\ref{eq:binomial-analogy}) and considering $\sqrt{S_k}$ as
part of the deviation, we have the following:
\begin{corollary}
  \label{cor:nb-roots-bern-sk}
  The expected number of real roots of a random polynomial in the
  Bernstein basis, Eq.~(\ref{eq:bernstein-poly}),
  the coefficients of which are normal i.i.d. random variables with mean 0
  and variance $1/S_k = 1/\sqrt{\frac{d}{\pi k (d-k)}}$,
  is  $\sqrt{ 2d} \pm \OO( 1)$.
\end{corollary}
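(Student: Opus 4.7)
The plan is to reduce the statement directly to Theorem~\ref{th:nb-roots-2d-2k} by combining the change of variables $z = y/(y+1)$ introduced earlier in the section with the approximation~(\ref{eq:binomial-analogy}). Applying this change of variables to $\widehat{P}$ yields $P(y) := (1+y)^d\,\widehat{P}(y/(y+1)) = \sum_{k=0}^d b_k \binom{d}{k} y^k$, a polynomial of degree $d$ whose real roots are in bijection with those of $\widehat{P}$ (the non-generic exceptions $y=-1$ and $z=1$ occur with probability zero). By~(\ref{eq:binomial-analogy}), $\binom{d}{k} \approx \sqrt{S_k}\sqrt{\binom{2d}{2k}}$, so setting $a_k := b_k\sqrt{S_k}$ gives i.i.d.\ standard normals (since $\mathrm{Var}(a_k) = S_k\cdot(1/S_k) = 1$), and the coefficients of $P$ behave asymptotically like those of $Q(y) := \sum_{k=0}^d a_k\sqrt{\binom{2d}{2k}}\, y^k$.

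Next I would observe that the auxiliary polynomial $R(x) := Q(x^2) = \sum_{k=0}^d a_k \sqrt{\binom{2d}{2k}}\,x^{2k}$ is exactly the random polynomial treated in Theorem~\ref{th:nb-roots-2d-2k}, so it has $\sqrt{2d}\pm\OO(1)$ expected real roots. Since $R$ is even in $x$, its real roots come in pairs $\pm x_0$, each pair corresponding to a single positive real root $x_0^2$ of $Q$; hence $Q$ has $\sqrt{2d}/2 \pm \OO(1)$ expected positive real roots. To recover the negative real roots, I would use the symmetry $a_k \mapsto (-1)^k a_k$, which preserves the joint Gaussian law: the polynomial $Q(-y) = \sum_k (-1)^k a_k \sqrt{\binom{2d}{2k}}\,y^k$ has the same distribution as $Q(y)$, and its positive real roots in $y$ are in bijection with the negative real roots of $Q$. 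Adding the two contributions yields $\sqrt{2d}\pm\OO(1)$ expected real roots for $Q$, and hence for $\widehat{P}$.

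The main technical obstacle is justifying that the approximation~(\ref{eq:binomial-analogy}), which comes from Stirling via Lemma~\ref{lem:binomial-power} and is only asymptotic, does not shift the expected root count outside the $\OO(1)$ window. Since $\sqrt{S_k}$ is uniformly bounded between $\sqrt{2}/(\pi d)^{1/4}$ and $1$, and the Edelman--Kostlan integrand depends smoothly on the covariance structure, the perturbation should enter only through lower-order contributions; making this rigorous amounts to carrying out the Stirling error bounds uniformly in $k$ and checking that the perturbed integrand integrates against the same leading asymptotics as in the proof of Theorem~\ref{th:nb-roots-2d-2k}.
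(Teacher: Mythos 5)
Your proposal is correct and follows essentially the same route as the paper: transform to the monomial basis via $z=y/(y+1)$, absorb $\sqrt{S_k}$ into the coefficient variance using~(\ref{eq:binomial-analogy}) so that the resulting polynomial matches the one of Theorem~\ref{th:nb-roots-2d-2k} after the substitution $y=x^2$, and recover the total count from the positive roots by the sign symmetry. You are in fact somewhat more explicit than the paper about the positive/negative-root bookkeeping and about the Stirling-approximation caveat, which the paper dismisses by asserting that omitting $S_k$ changes the asymptotics only negligibly.
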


In Table~\ref{tab:rand-bern} we present the results of experiments with
polynomials in the Bernstein basis (see Eq.~(\ref{eq:bernstein-poly})), 
of degree $\le 1000$,
the coefficients of which are i.i.d.\ random variables following the standard
normal distribution, 
that is  mean zero and variance 1.
For each degree we tested 100 polynomials.
The first column is the degree, while the second is the expected
number of real roots predicted by Cor.~\ref{cor:nb-roots-bern-sk} which
assumes variance $1/S_k$.
The third column is the average number of real roots computed.
Our experiments support the following conjecture:
\begin{conjecture}
  \label{cl:nb-roots-bern}
  The expected number of real roots of a random polynomial in the
  Bernstein basis, Eq.~(\ref{eq:bernstein-poly}),
  the coefficients of which are standard normal i.i.d. random
  variables,
  that is with mean 0 and variance 1, is $\sqrt{ 2d} \pm \OO( 1)$.
\end{conjecture}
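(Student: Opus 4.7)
The plan is to parallel the proof of Theorem~\ref{th:nb-roots-2d-2k}, but applied directly to the polynomial $P(y) = \sum_{k=0}^d b_k \binom{d}{k} y^k$ obtained from the Bernstein form by the change of variable $z = y/(y+1)$, with $b_k$ i.i.d.\ standard normals. Since $((-1)^k b_k)_k$ has the same joint distribution as $(b_k)_k$, the expected number of negative real roots of $P$ equals that of positive real roots; hence it suffices to compute the latter and double.

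Applying Proposition~\ref{prop:EK} with $v(y) = \bigl(\binom{d}{k} y^k\bigr)_{k=0}^d$ and $C$ the identity yields $v(x)^{\top} C v(y) = F(xy)$, where $F(u) := \sum_{k=0}^d \binom{d}{k}^2 u^k$. Setting $h := \log F$, the expected number of positive real roots is
\begin{displaymath}
I = \frac{1}{\pi} \int_0^\infty \sqrt{\bigl(u\, h'(u)\bigr)'\big|_{u=t^2}}\, dt,
\end{displaymath}
and the goal is to show $I = \frac{\sqrt{2d}}{2} \pm \OO(1)$.

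The key technical step is a saddle-point expansion for $F$. By a Vandermonde-type identity, $F(u) = [z^d]\bigl((1+z)(1+uz)\bigr)^d$, so Cauchy's formula applies. The unique positive saddle of $\phi(z) := \log(1+z) + \log(1+uz) - \log z$ is $z_0 = 1/\sqrt{u}$, with $e^{d\phi(z_0)} = (1+\sqrt{u})^{2d}$ and $\phi''(z_0) = 2u^{3/2}/(1+\sqrt{u})^2$. Standard saddle-point analysis then yields
\begin{displaymath}
F(u) = \frac{(1+\sqrt{u})^{2d+1}}{2\sqrt{\pi d}\; u^{1/4}}\bigl(1 + \OO(1/d)\bigr)
\end{displaymath}
uniformly on compact subsets of $(0,\infty)$. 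Substituting gives $h(u) = (2d+1)\log(1+\sqrt{u}) - \tfrac{1}{4}\log u + \mathrm{const} + \OO(1/d)$, hence $(u\,h'(u))' = (2d+1)/\bigl(4\sqrt{u}(1+\sqrt{u})^2\bigr) + \OO(1)$. After the change of variable $t = s^2$ the leading integral reduces to $2\int_0^\infty ds/(1+s^2) = \pi$, producing $I = \sqrt{2d+1}/2 + \OO(1)$ and, doubling for negative roots, $\sqrt{2d} \pm \OO(1)$ overall.

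The main obstacle is controlling the error from the saddle-point approximation near the endpoints $u \to 0$ and $u \to \infty$, where the asymptotic formula above degrades. Near zero one falls back on the Taylor expansion $F(u) = 1 + d^2 u + \OO(d^4 u^2)$; at infinity one exploits the reciprocal symmetry $F(u) = u^d\, F(1/u)$ to reduce to the $u \to 0$ regime. Partitioning $(0,\infty)$ into a bulk $[1/d,\,d]$, where a uniform saddle estimate delivers the stated leading term, and two boundary windows whose contributions are each $\OO(1)$ by direct estimates, should yield the required total error. Morally, this explains why replacing the weights $1/S_k$ of Corollary~\ref{cor:nb-roots-bern-sk} by unit weights preserves the leading $\sqrt{2d}$ count: the factor $\sqrt{S_k}$ is smoothly varying in $k/d$ and is absorbed into the $\OO(1/d)$ corrections of the saddle-point expansion.
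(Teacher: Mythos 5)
First, note that the paper itself does not prove this statement: it is deliberately left as a conjecture supported only by the experiments of Table~\ref{tab:rand-bern}, while the proved results (Theorem~\ref{th:nb-roots-2d-2k}, Corollary~\ref{cor:nb-roots-bern-sk}) concern the modified variance $1/S_k$. So your attempt goes beyond the paper, and your route is a natural one: apply Proposition~\ref{prop:EK} directly to $P(y)=\sum_k b_k\binom{d}{k}y^k$ with kernel $F(u)=\sum_k\binom{d}{k}^2u^k=[z^d]\bigl((1+z)(1+uz)\bigr)^d$, and extract asymptotics by a saddle point at $z_0=1/\sqrt{u}$. Your leading-order algebra checks out: the saddle data, the estimate $F(u)\approx(1+\sqrt{u})^{2d+1}/(2\sqrt{\pi d}\,u^{1/4})$ (consistent with $F(1)=\binom{2d}{d}$), the resulting density $\frac{\sqrt{2d+1}}{2\pi}\,\frac{1}{\sqrt{t}(1+t)}$, the value $\pi$ of the normalized integral, and the doubling via the symmetry $b_k\mapsto(-1)^kb_k$ are all correct.

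However, as written the error control does not deliver the claimed $\pm\OO(1)$, for two concrete reasons. (i) You differentiate the asymptotic of $F$ twice: a multiplicative $1+\OO(1/d)$ estimate for $F$ on real compacts gives no control of $h'=F'/F$ and $h''$; you need uniform asymptotics on a complex neighborhood plus Cauchy estimates, or separate saddle-point analyses of $F'$ and $F''$. Worse, an \emph{additive} $\OO(1)$ error in $(uh')'$, which is what you assert, is not strong enough: since the main term of the squared density is of order $d/(\sqrt{u}(1+\sqrt{u})^2)$, a uniform additive $\OO(1)$ perturbation can shift the count by as much as $\int_0^{\sqrt{d}}\sqrt{t}\,(1+t)\,d^{-1/2}\,dt=\Theta(d^{3/4})$; you need a \emph{relative} error bound, uniform over the bulk, with explicit dependence on $u$ and $d$. (ii) The boundary windows are not $\OO(1)$ for your stated bulk $u\in[1/d,d]$: under the limiting density, the region $t\in(0,d^{-1/2})$ (i.e.\ $u\le 1/d$) already carries $\Theta(d^{1/4})$ expected roots, and by the symmetry $F(u)=u^dF(1/u)$ so does $t>\sqrt{d}$, so these windows cannot be discarded by crude estimates. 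If instead you enlarge the bulk to $u\in[d^{-2},d^{2}]$ so that the complementary windows genuinely hold $\OO(1)$ roots, then at the edges the saddle-point parameter $d\sqrt{u}/(1+\sqrt{u})^2$ is only $\Theta(1)$ and the uniform estimate you invoke breaks down; the transition regime $u$ between $d^{-2}$ and $d^{-1}$ (and its reciprocal) is exactly where the $\Theta(d^{1/4})$ roots live and needs a dedicated uniform analysis, which neither the Taylor expansion $F(u)=1+d^2u+\OO(d^4u^2)$ nor the naive saddle point covers to within total error $\OO(1)$. Your closing remark that $\sqrt{S_k}$ is absorbed into lower-order corrections is likewise heuristic (it mirrors the paper's own informal justification before Theorem~\ref{th:nb-roots-2d-2k}) rather than a proof. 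In short: a promising and essentially correct leading-order computation, but the two gaps above must be closed before Conjecture~\ref{cl:nb-roots-bern} can be considered proved.
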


Columns 4-7 of Tab.~\ref{tab:rand-bern} corresponds to the average
number of real roots in the intervals $(-\infty, -1)$, $(-1, 0)$, 
$(0,1)$ and $(1, \infty)$, respectively.
For these experiments we took random polynomials in the monomial basis
and converted them to the Bernstein basis.
The roots of a random polynomial in the monomial basis, under the assumptions of
\cite{HugNik-cm-2008}, concentrate around the unit circle.
The symmetry of the density suggests that each of the intervals 
$(-1/(1-\rho), -1)$, $(-1, -1 + \rho)$, $(1 - \rho, 1)$, and $(1, 1/(1-\rho))$,
contains on the average $1/4$ of the real roots
(Fig.~\ref{fig:transformation}, left).
If we apply the transformation $x \mapsto x/(x+1)$
(Fig.~\ref{fig:transformation}, right)
to transform the polynomial to the Bernstein basis, 
then $3/4$ of the real roots are positive, 
$1/2$ of them are in $(0, 1)$ and $1/4$ in $(1, \infty)$.
We refer to the last columns of Tab.~\ref{tab:rand-bern} for
experimental evidences of this.

\begin{figure}[t] \centering
  \includegraphics[scale=0.25]{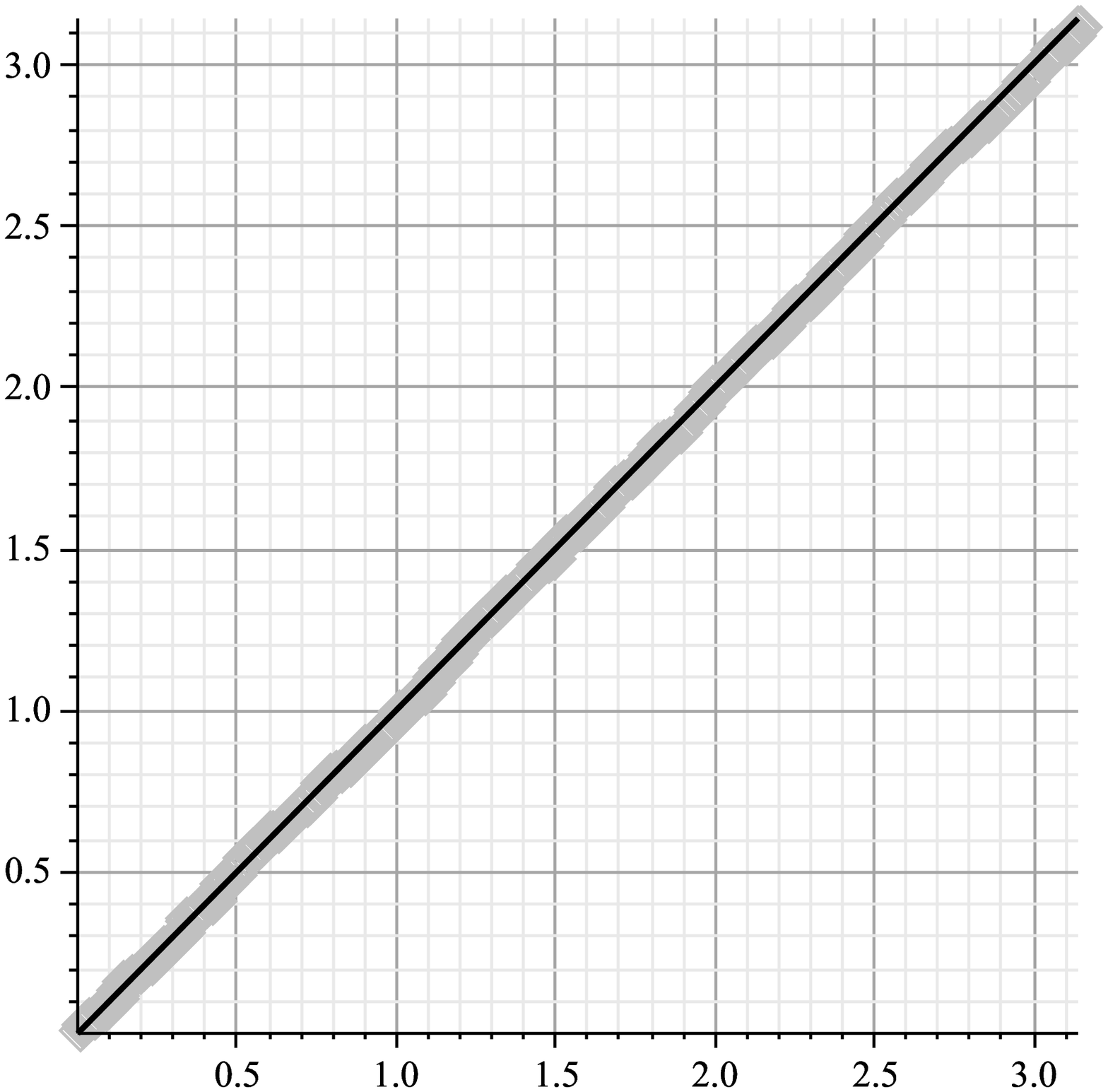} \hspace{30pt}
  \includegraphics[scale=0.25]{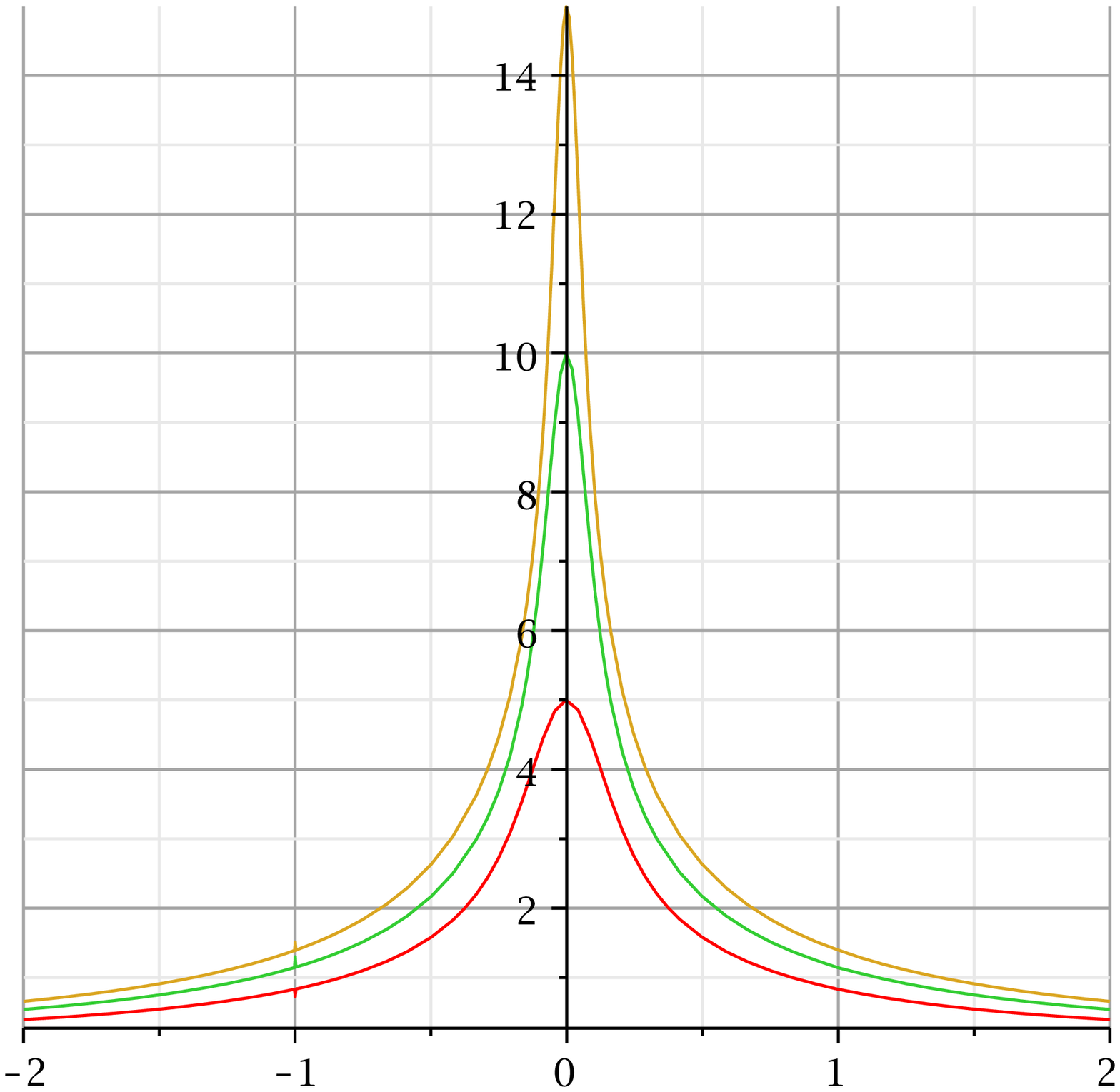}
  \caption{Left: Function $\arccos(2t-1)$ of real roots in $(0, 1)$, against uniform
    distribution in $(0, \pi)$. 
    Right: Density of polynomials in the Bernstein basis for $d \in \{5,10,15\}$.}
  \label{fig:roots-in-0-1}
  \label{fig:density-bernstein}
\end{figure} 

As far as the distribution of the real roots in $(0, 1)$ is concerned,
if we denote them by $t_i$, then $\arccos(2 t_i-1)$, behaves as the uniform
distribution in $(0, \pi)$.
In Fig.~\ref{fig:roots-in-0-1}, we present the probability-probability plot,
(using the \texttt{ProbabilityPlot} command of \func{maple}) of
this function of real roots of random polynomials in Bernstein basis, of degree 1000 
(light grey line), against the theoretical uniform distribution (black line) in
$(0, \pi)$.
We observe that the lines almost match.
For reasons of space, we postpone the discussion about the distribution of the roots.

\begin{table}[t]
  \small
  \centering
  \begin{tabular}{|c||c||c||c|c|c|c|}
    \hline
    {\tiny $d$} & {\tiny $\sqrt{2d}$} & {\tiny $(-\infty, \infty)$} & {\tiny $(-\infty, -1)$} 
    & {\tiny $(-1, 0)$} & {\tiny $(0, 1)$} & {\tiny $(1, \infty)$} \\ \hline 
    100  &  14.142 &   13.640 &   0.760 &   2.740 &   6.530 &   3.610 \\ \hline 
    150  &  17.321 &   16.540 &   0.890 &   3.260 &   8.090 &   4.300 \\ \hline 
    200  &  20.000 &   19.740 &   1.100 &   3.780 &   9.740 &   5.120 \\ \hline 
    250  &  22.361 &   21.400 &   1.350 &   3.970 &  10.610 &   5.470 \\ \hline 
    300  &  24.495 &   24.320 &   1.270 &   4.760 &  12.300 &   5.990 \\ \hline 
    350  &  26.458 &   26.540 &   1.620 &   5.100 &  13.400 &   6.420 \\ \hline 
    400  &  28.284 &   27.980 &   1.490 &   5.430 &  14.080 &   6.980 \\ \hline 
    450  &  30.000 &   29.460 &   1.620 &   5.890 &  14.970 &   6.980 \\ \hline 
    500  &  31.623 &   31.200 &   1.830 &   5.960 &  15.620 &   7.790 \\ \hline 
    550  &  33.166 &   32.740 &   1.770 &   6.360 &  16.290 &   8.320 \\ \hline 
    600  &  34.641 &   34.300 &   1.850 &   6.570 &  17.270 &   8.610 \\ \hline 
    650  &  36.056 &   35.480 &   2.050 &   6.840 &  17.240 &   9.350 \\ \hline 
    700  &  37.417 &   37.200 &   2.160 &   7.510 &  18.650 &   8.880 \\ \hline 
    750  &  38.730 &   38.180 &   2.190 &   7.300 &  19.360 &   9.330 \\ \hline 
    800  &  40.000 &   39.160 &   2.220 &   7.830 &  19.490 &   9.620 \\ \hline 
    850  &  41.231 &   40.420 &   2.130 &   8.010 &  20.320 &   9.960 \\ \hline 
    900  &  42.426 &   41.780 &   2.390 &   8.070 &  20.530 &  10.790 \\ \hline 
    950  &  43.589 &   42.680 &   2.200 &   8.330 &  21.570 &  10.580 \\ \hline 
    1000  &  44.721 &   43.540 &   2.400 &   8.610 &  21.770 &  10.760 \\ \hline 
  \end{tabular}
  \caption{Experiments with random polynomial in the Bernstein basis.}
  \label{tab:rand-bern}
\end{table} 

\section{Conclusions and future work}

Our results explain why the solvers are fast in general, since
typically there are few real roots and in general the separation bound
is good enough.
This agrees with the fact that in most cases the practical complexity of the
\func{sturm} solver is dominated by the computation of the sequence
and not by the evaluation.  
Our current work extends the first part of this paper to Kac polynomials,
and to solvers \func{descartes} and \func{bernstein}.

The main issue with the Kac polynomials is that there is a
discontinuity at $\pm 1$ when $d \rightarrow \infty$.
To be more precise, the fact that there are few roots even near $\pm 1$,
where they are concentrated asymptotically, is balanced
by the fact the 2-point correlation, $k(s_1, s_2)$,  
between two consecutive roots is a complicated
function of $|s_1-s_2|$, $s_1$ and $d$ and (in opposition with
the two other distributions we studied) its limit
when $d$ tends to infinity
is not equivalent to a simple function of $|s_1-s_2|$.
This is an interesting  problem which deserves to be 
studied and investigate further.

An interesting question is whether we can design a randomized exact algorithm
based on the properties of random polynomials.
Lastly, we wish to extend our study to polynomials with inexact coefficients.

{\bf Acknowledgement.}
We thank the reviewers for their constructive comments.
IZE  thanks D.~Hristopoulos for discussions on
the statistics of roots' distributions.
AG acknowledge fruitful discussions with Julien Barre. 
IZE and AG are partially supported by 
Marie-Curie Network ``SAGA'', FP7 contract PITN-GA-2008-214584.
ET is partially supported by an individual postdoctoral grant from the
Danish Agency for Science, Technology and Innovation,
and by the State Scholarship Foundation of Greece.

{ 
\bibliographystyle{plain}
\bibliography{random}
}

\end{document}